\crefname{equation}{}{}
\crefname{thm}{theorem}{theorems}
\crefname{prop}{proposition}{propositions}
\crefname{lem}{lemma}{lemmas}
\crefname{exa}{example}{examples}
\theoremstyle{definition}
\newtheorem{runexa}{Running example part}
\begin{document}

\title[Deciding Equations in the Time Warp Algebra]{Deciding Equations in the Time Warp Algebra}
\titlecomment{A precursor to this paper, reporting preliminary results, appeared in the proceedings of RAMiCS 2021~\cite{GGMS21}.}

  \author[S. van Gool]{Sam van Gool\lmcsorcid{0000-0002-6360-6363}}[a]
  \address{IRIF, Universit{\'e} Paris Cit{\'e}, France}
  \email{vangool@irif.fr}
  
  \author[A. Guatto]{Adrien Guatto\lmcsorcid{0000-0001-7961-5611}}[b]
  \address{IRIF, Universit{\'e} Paris Cit{\'e}, France}
  \email{guatto@irif.fr}
 
\author[G. Metcalfe]{George Metcalfe\lmcsorcid{0000-0001-7610-404X}}[c]	
\address{Mathematical Institute, University of Bern,  Switzerland}	
\email{george.metcalfe@unibe.ch}  

\author[S. Santschi]{Simon Santschi\lmcsorcid{0009-0003-9364-5149}}[d]
\address{Mathematical Institute, University of Bern,  Switzerland}	
\email{simon.santschi@unibe.ch}  
 
\thanks{This project has received funding from the European Union’s Horizon 2020 research and innovation programme under the Marie Sk{\l}odowska-Curie grant agreement No 101007627, and Swiss National Science Foundation grant 200021\textunderscore 215157.}	

\keywords{Residuated lattices, Universal algebra, Decision procedures, Graded modalities, Type systems, Programming languages}
\subjclass{F.4.1, I.2.3}

%%%%%%%%%%%%%%%%%%%%%%%%%%%%%%%%%%%%%%%%%%%%%%%%%%%%%%%%%%%%%%%%%%%

\begin{abstract}
Join-preserving maps on the discrete time scale $\om^+$, referred to as time warps, have been proposed as graded modalities that can be used to quantify the growth of information in the course of program execution. The set of time warps forms a simple distributive involutive residuated lattice---called the time warp algebra---that is equipped with residual operations relevant to potential applications. In this paper, we show that although the time warp algebra generates a variety that lacks the finite model property, it nevertheless has a decidable equational theory. We also describe an implementation of a procedure for deciding equations in this algebra, written in the OCaml programming language, that makes use of the Z3 theorem prover.
\end{abstract}

%%%%%%%%%%%%%%%%%%%%%%%%%%%%%%%%%%%%%%%%%%%%%%%%%%%%%%%%%%%%%%%%%%%

\maketitle

%%%%%%%%%%%%%%%%%%%%%%%%%%%%%%%%%%%%%%%%%%%%%%%%%%%%%%%%%%%%%%%%%%%

\section{Introduction}

{\em Graded modalities}~\cite{FujiiKatsumataMellies-2016,GaboardiKatsumataOrchardBreuvartUustalu-2016} provide a unified setting for describing effects of a program, such as which parts of memory it modifies~\cite{GiffordLucassen-1988}, or the resources it consumes, such as how long it takes to run~\cite{GhicaSmith-2014}. Given a {\em type} $A$ and \emph{grading}~$f$, the new type~$\Box_f A$ represents a modification of~$A$ that incorporates the behavior prescribed by~$f$. The gradings themselves are often equipped with an ordered algebraic structure that is relevant for programming applications. Typically, they form a monoid, relating $\Box_{gf} A$ to~$\Box_f \Box_g A$, or admit a precision order along which the graded modality acts contravariantly, i.e., for $f\le g$, there exists a generic program of type~$\Box_g A \to \Box_f A$ that allows movement from more to less precise types. In this case, the order may yield further operations on types; e.g., if the infimum $f\mt g$ of~$f$ and~$g$ exists, it allows values of types~$\Box_f A$ and~$\Box_g B$ to be converted into a value of type~$\Box_{f \mt g} (A \times B)$.

In this paper, we study an algebraic structure induced by the class of graded modalities known as~\emph{time warps}: join-preserving maps on the discrete time scale $\som = \om \cup \{ \om \}$~\cite{Guatto-2018}. Informally, such maps describe the growth of data during program execution, whereby a type~$A$ describes a sequence of sets $A_n$ of values classified by~$A$ at execution step~$n\in\om$, and the type~$\Box_f A$ classifies the set of values of~$A_{f(n)}$ at this step. This approach generalizes a long line of work on programming languages for embedded systems (see,~e.g.,~\cite{CaspiPouzet-1996}) and type theories with modal recursion operators (see,~e.g.,~\cite{Nakano-2000,BirkedalMogelbergSchwinghammerStovring-2012}). 

Let us denote the set of time warps by $W$. Then~$\langle W,\circ,\id\rangle$ is a monoid, where $f\circ g$ (often shortened to $fg$) is the composition of $f,g\in W$, and $\id$ is the identity map. Equipping $W$ with the pointwise order defined by $f\le g\defiff f(n)\le g(n)$, for all $n\in\som$, yields a complete distributive lattice $\langle W,\mt,\jn\rangle$ satisfying for all $f,g_1,g_2,h\in W$, 
\[
f(g_1\jn g_2)h = fg_1h\jn fg_2h\enspace\text{ and }\enspace f(g_1\mt g_2)h = fg_1h\mt fg_2h.
\]
Equivalently, the operation $\circ$ is a {\em double quasi-operator} on $\langle W,\mt,\jn\rangle$~\cite{GP07b,GP07a} or the algebraic structure $\langle W,\mt,\jn,\circ,\id\rangle$ is a {\em distributive $\ell$-monoid}~\cite{GJ17,CGMS21}. Moreover, since time warps are join-preserving, there exist binary operations $\ld,\rd$ on $W$, called {\em residuals}, satisfying for all $f,g,h\in W$,
\[
f\le h\rd g \:\iff\: fg\le h \:\iff\: g\le f\ld h.
\]
That is, $\langle W,\mt,\jn,\circ,\ld,\rd,\id\rangle$ is a {\em residuated lattice}~\cite{BT03,MPT23}. From a programming perspective, residuals play a role similar to that of weakest preconditions in deductive verification. The time warp~$h \rd g$ may be viewed as the most general~(largest) time warp~$g'$ such that~$\Box_h A$ can be sent generically to~$\Box_g \Box_{g'} A$. In other words,~$\Box_{h \rd g} A$ is the most precise type~$B$ such that~$\Box_g B$ is a supertype of~$\Box_h A$. Similarly,~$f \ld h$ is the most general time warp~$f'$ such that~$\Box_h A$ can be sent generically to~$\Box_{f'} \Box_f A$. Such questions of genericity arise naturally when programming in a modular way~\cite{Guatto-2018}, justifying the consideration of residuals in gradings.

In order to benefit from the extra flexibility and descriptive power of adopting graded modalities in a programming language, the language implementation should be able to decide the order between gradings to distinguish between well-typed and ill-typed programs. That is, an algorithm is required that decides equations between terms interpreted over the algebraic structure of time warps. In a precursor to this paper~\cite{GGMS21}, decidability of the equational theory of the bounded residuated lattice $\langle W,\mt,\jn,\circ,\ld,\rd,\id,\bot,\top\rangle$, with least and greatest time warps $\bot$ and $\top$, respectively, was established. A key ingredient of the proof, however, involves translating terms into a rather awkward normal form which both complicates the decision procedure and obscures the algebraic meaning of the equations.

In this paper, following ideas of~\cite{Eklund2018,San20} on quantales of join-preserving maps on a complete lattice, we obtain an elegant normal form for time warp terms by introducing a constant for the  `predecessor' time warp $\dc$ that maps each $m\in\om{\setminus}\{0\}$ to $m-1$ and is constant on $\{0,\om\}$. More precisely, the {\em pointed} residuated lattice $\langle W,\mt,\jn,\circ,\ld,\rd,\id,\dc\rangle$ is term-equivalent to the involutive residuated lattice $\langle W,\mt ,\jn, \circ, \inv{},\id \rangle$, where $\inv{f}\coloneqq f\ld \dc$  is an \emph{involution} on the lattice $\langle L,\mt,\jn\rangle$. For convenience, we refer to both of these structures as the {\em time warp algebra} $\WarpA$. Because the monoid operation distributes over both joins and meets, and meets distribute over joins, every term is equivalent in $\WarpA$ to a meet of joins of {\em basic terms}, constructed using just the monoid operations and involution. It follows that the equational theory of $\WarpA$ is decidable if there exists an algorithm that decides $\WarpA\models \ut\le t_1\jn\cdots\jn t_n$ for arbitrary basic terms $t_1,\dots,t_n$, where $\ut$ is interpreted as the identity map. We provide such an algorithm by relating the existence of a counterexample to $\ut\le t_1\jn\cdots\jn t_n$ to the satisfiability of a corresponding first-order formula in $\som$, understood as an ordered structure with a decidable first-order theory, and describe an implementation, written in the OCaml programming language, that makes use of the Z3 theorem prover.

%%%%%%%%%%%%%%%%%%%%%%%%%%%%%%%%%%%%%%%%%%%%%%%%%%%%%%%%%%%%%%%%%%%

\subsection*{Overview of the paper}

In Section~\ref{sec:time-warp-algebra}, we introduce the time warp algebra $\WarpA$ as an example of an involutive residuated lattice consisting of  join-preserving maps on a complete chain (totally ordered set), and establish some of its elementary properties. In particular, we show that $\WarpA$ is simple, has no finite subalgebras, and generates a variety (equational class) that lacks the finite model property. We also describe a subalgebra $\m{R}$ of $\WarpA$ consisting of time warps that are `regular' in the sense that they are eventually either constant or linear, and provide an explicit description of the involution operation. 

In Section~\ref{sec:diagrams}, we describe the existence of a potential counterexample to $\ut\le t_1\jn\cdots\jn t_n$, where $t_1,\dots,t_n$ are basic terms, using the notion of a {\em diagram}, which provides a finite partial description of an evaluation of terms as time warps. As a consequence, an equation is satisfied by $\WarpA$ if, and only if, it is satisfied by $\m{R}$ (\Cref{thm:regular}). Finally, we use the resulting description to reduce the existence of a counterexample to the  satisfiability of a first-order formula in $\som$, understood as an ordered structure with a decidable first-order theory, thereby establishing the decidability of the equational theory of $\WarpA$ (\Cref{thm:decidable}). 

In Section~\ref{sec:implementation}, we describe an implementation of our decision procedure for the equational theory of $\WarpA$, written in the OCaml programming language~\cite{OCaml}, that makes use of the Z3 theorem prover~\cite{Z3}. Finally, in Section~\ref{sec:concluding}, we consider some potential avenues for further research; in particular, we explain how to adapt the decision procedure to deal with extra constants for first-order definable time warps such as $\bot$ and $\top$, and explore a relational approach to the study of the time warp algebra and related structures.

%%%%%%%%%%%%%%%%%%%%%%%%%%%%%%%%%%%%%%%%%%%%%%%%%%%%%%%%%%%%%%%%%%%

\section{The Time Warp Algebra}\label{sec:time-warp-algebra}

In this section, we introduce and establish some elementary properties of the time warp algebra $\WarpA$ in the general framework of involutive residuated lattices. In particular, we show that the variety generated by $\WarpA$ lacks the finite model property and that  `regular' time warps---those that are eventually either constant or linear---form a subalgebra of $\WarpA$. We also provide an explicit description of the involution operation on time warps.

A  {\em pointed residuated lattice} (also known as an {\em FL-algebra}) is an algebraic structure $\m{L}=\langle L,\mt,\jn,\cdot,\ld,\rd,\ut,\zr\rangle$ of signature $\langle 2,2,2,2,2,0,0\rangle$ such that $\langle L,\pd,\ut \rangle$ is a monoid, $\langle L,\mt,\jn\rangle$ is a lattice with an order defined by $a \le b \defiff a \mt b = a$, and, for all $a,b,c \in L$,
\[
a \le c \rd b \:\iff\: a b \le c \:\iff\: b \le a \ld c,
\]
where the residual operators are given explicitly for $a,b\in L$ by
\[
a\ld b=\bigvee\{c\in L\mid ac\le b\}
 \quad \text{and} \quad
b\rd a=\bigvee\{c\in L\mid ca\le b\}.
\]
Moreover, it follows from the existence of residual operations $\ld$ and $\rd$ that the operation $\pd$ preserves existing joins in both coordinates; in particular, for all $a,b_1,b_2,c\in L$,
\[
a(b_1\jn b_2)c=ab_1c\jn ab_2c.
\]
The structure $\m{L}$ is said to be {\em distributive} if its lattice reduct $\langle L,\mt,\jn\rangle$ is distributive and {\em fully distributive} if it is distributive and, for all $a,b_1,b_2,c\in L$,
\[
a(b_1\mt b_2)c=ab_1c\mt ab_2c.
\]
The element $\zr\in L$ is called \emph{cyclic} if $\zr\rd a = a \ld\zr$, for all $a \in L$, and \emph{dualizing} if, for all $a \in L$,
\[
\zr\rd (a \ld\zr) = a = (\zr\rd a) \ld \zr.
\]
If $\zr$ is both cyclic and dualizing, then the map $'\colon L \to L;\:x \mapsto x\ld\zr$ is an \emph{involution} on the lattice $\langle L, \mt, \jn \rangle$, satisfying for all $a, b\in L$, 
\[
a \leq b \iff b' \leq a',\quad a'' = a,\quad (a \mt b)' = a' \jn b', \quad \text{and} \quad (a \jn b)' = a' \mt b'.
\]
The class of pointed residuated lattices such that $\zr$ is cyclic and dualizing forms a variety (equational class) that is term-equivalent to the variety of  \emph{involutive residuated lattices}: algebraic structures $\langle L,\mt,\jn,\pd,\mathop{'},\ut \rangle$ of signature $\langle 2, 2, 2, 1, 0 \rangle$  such that  $\langle L, \pd,\ut \rangle$ is a monoid, $\langle L, \mt, \jn \rangle$ is a lattice, $'$ is an involution on $\langle L,\mt ,\jn \rangle$, and for all $a,b,c \in L$,
\[
b \leq (c'a)'  \:\iff\: ab \leq c\:\iff\: a \leq (b c')'.
\]
The term-equivalence is implemented by defining $x' \coloneqq x\ld\zr$ and, conversely, $x\ld y \coloneqq (y'x)'$, $y\rd x \coloneqq (xy')'$, and $\zr\coloneqq\ut'$ (see~\cite{MPT23} for further details). For any element $a$ of an involutive residuated lattice, we also define inductively $a^0:=\ut$ and $a^{k+1}:=a^k\pd a$ ($k\in\mathbb{N}$).

Now let $\m{Tm}$ denote the {\em term algebra} of the language of involutive residuated lattices defined over a countably infinite set of variables ${\rm Var}$, and call a term $t\in{\rm Tm}$ {\em basic} if it is constructed using the operation symbols $\pd$, $\mathop{'}$, and $\ut$. As usual, an {\em equation} is an ordered pair of terms $s,t\in{\rm Tm}$, denoted by $s\eq t$, and $s\leq t$ abbreviates $s\mt t\eq s$. An involutive residuated lattice $\m{L}$ {\em satisfies} an equation $s\eq t$, denoted by $\m{L}\models s\eq t$, if $\sem{s}_h=\sem{t}_h$ for every homomorphism $h\colon\m{Tm}\to\m{L}$, where $\sem{u}_h:=h(u)$  for  $u\in{\rm Tm}$. 

In a fully distributive involutive residuated lattice, every term is equivalent to a meet of joins of basic terms. More precisely:

\begin{lem}\label[lem]{lem:normal}
There exists an algorithm that produces for any term $t\in{\rm Tm}$ positive integers $m,n_1,\dots,n_m$ and basic terms $t_{i,j}$ for each $i\in\{1,\dots,m\}$ and $j\in\{1,\dots,n_i\}$ such that for any fully distributive involutive residuated lattice $\m{L}$,
\[
\m{L}\models t \eq \bigmt_{i=1}^m \bigjn_{j=1}^{n_i} t_{i,j}.
\]
\end{lem}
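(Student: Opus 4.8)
The plan is to proceed by structural induction on the term $t \in {\rm Tm}$, exhibiting at each step an explicit procedure that rewrites $t$ into the desired meet-of-joins-of-basic-terms normal form. The base cases are immediate: a variable $x$ is already basic, and so is the constant $\ut$; in either case we take $m = n_1 = 1$ and $t_{1,1} = t$.

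\medskip

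For the inductive step, I would assume that each immediate subterm has already been normalized, so that it equals some $\bigmt_i \bigjn_j s_{i,j}$ with the $s_{i,j}$ basic, and then handle each outermost operation symbol in turn. For the lattice operations $\mt$ and $\jn$, the normal form is closed essentially by construction: a meet of two meets-of-joins is again a meet of joins, and for $\jn$ one uses distributivity of the lattice $\langle L,\mt,\jn\rangle$ to push the outer join inside the meets, which is legitimate precisely because $\m{L}$ is distributive. The interesting cases are $\pd$ and $'$. For $\pd$, I would use that in a fully distributive involutive residuated lattice the monoid operation distributes over \emph{both} joins and meets in each coordinate (the join case is noted in the excerpt as a consequence of residuation, and the meet case is exactly full distributivity), so a product $\left(\bigmt_i \bigjn_j s_{i,j}\right)\pd\left(\bigmt_k \bigjn_l u_{k,l}\right)$ can be expanded, by repeatedly applying these two distributive laws, into a meet of joins of products $s_{i,j}\pd u_{k,l}$ of basic terms, each of which is again basic. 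For the involution $'$, I would invoke the De Morgan laws $(a\mt b)' = a'\jn b'$ and $(a\jn b)' = a'\mt b'$ recorded in the excerpt: applying $'$ to a meet of joins turns it into a join of meets of terms $s_{i,j}'$, and since $s_{i,j}$ is basic so is $s_{i,j}'$; one then reapplies distributivity to return from join-of-meets to meet-of-joins form.

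\medskip

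The main obstacle, and the point requiring the most care, is bookkeeping the alternation of meets and joins so that the output genuinely has the prescribed shape (an outer meet of inner joins) after each operation, rather than some arbitrarily nested expression. In particular, the involution case produces the dual nesting (join of meets) and the product case produces products that must be regrouped, so in both cases a final application of lattice distributivity is needed to restore the canonical meet-of-joins layout; I would factor this regrouping out as a small subroutine that converts any finite distributive-lattice combination of basic terms into meet-of-joins normal form, and invoke it after each inductive step. Since each rewriting step is effective and the term is finite, composing these subroutines yields the required algorithm, and the chain of equalities it witnesses holds in every fully distributive involutive residuated lattice because each rewriting rule is an instance of one of the identities (distributivity, full distributivity, De Morgan) that hold in all such algebras.
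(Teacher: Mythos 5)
Your proposal is correct and takes essentially the same approach as the paper, whose proof is a one-sentence description of exactly this procedure: iteratively distributing joins over meets and the monoid multiplication over both meets and joins, and pushing the involution inwards via the De Morgan laws. Your structural induction merely organizes that rewriting process more explicitly, including the regrouping step needed to restore the meet-of-joins shape after the involution and product cases.
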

\begin{proof}
The desired basic terms are obtained by iteratively distributing joins over meets and the monoid multiplication over both meets and joins, and pushing the involution inwards using the De Morgan laws.
\end{proof}

Next, let $\m{C}$ be any complete chain with a least element $0$ and a greatest element $\infty$. The set $\res(\m{C})$ of maps on $\m{C}$ that preserve arbitrary joins forms a fully distributive pointed residuated lattice $\Res(\m{C})=\langle \res(\m{C}),\mt,\jn,\circ,\ld,\rd,\id,\dc \rangle$, where $\mt,\jn$ are defined pointwise, $\circ$ is functional composition, $\id$ is the identity map, and $\dc$ is the join-preserving map
\[
\dc\colon C\to C;\quad x\mapsto\bigjn \{y \in C \mid y < x\}.
\]
The least and greatest elements of $\Res(\m{C})$ are the maps $\bot,\top$ satisfying, respectively, $\bot(x) = 0$, for all $x\in C$, and $\top(0) = 0$ and $\top(x) = \infty$ for all $x\in C{\setminus}\{0\}$. Moreover, it follows from~\cite[Proposition~2.6.18]{Eklund2018} and~\cite[Section~4]{San20}) that $\dc$ is the unique cyclic and dualizing element of $\Res(\m{C})$ and hence that $\Res(\m{C})$ is term-equivalent to the involutive residuated lattice $\langle \res(\m{C}),\mt,\jn, \circ,\inv{},\id\rangle$, where $\inv{f}\coloneqq f\ld \dc = \dc\rd f$ for each $f\in\res(\m{C})$.

Recall that an algebraic structure $\m{A}$ is \emph{simple} if $\mathrm{Con}(\m{A}) = \{\De_A,\nabla_A\}$, where $\mathrm{Con}(\m{A})$ denotes the set of congruences of $\m{A}$,  $\De_A=\{\langle a,a\rangle \mid a\in A\}$, and $\nabla_A = A\times A$. 

\begin{prop}\label[prop]{prop:simple}
For any complete chain $\m{C}$, the pointed residuated lattice $\Res(\m{C})$ is simple.
\end{prop}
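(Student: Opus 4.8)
The plan is to show that every congruence $\theta$ on $\Res(\m{C})$ other than the diagonal must be the full relation, and the whole argument reduces to the single claim that $\bot \mathrel{\theta} \top$. This reduction is immediate: once $\bot \mathrel{\theta} \top$, then for any $h \in \res(\m{C})$, applying compatibility of $\theta$ with $\jn$ to the pair $\bot \mathrel{\theta} \top$ gives $h = h \jn \bot \mathrel{\theta} h \jn \top = \top$, so every element is $\theta$-related to $\top$ and hence $\theta = \nabla$. To produce $\bot \mathrel{\theta} \top$ from an arbitrary nontrivial congruence, first use that $\theta$ is a lattice congruence: if $f \mathrel{\theta} g$ with $f \neq g$, then also $f \mt g \mathrel{\theta} f \jn g$, so we may assume we are given a pair $p \mathrel{\theta} q$ with $p \le q$ and $p \neq q$. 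Fixing a point $x_0 \in C$ with $p(x_0) < q(x_0)$ (necessarily $x_0 > 0$, since every join-preserving map sends $0$ to $0$), we set $v \coloneqq p(x_0)$ and $u \coloneqq q(x_0)$, so $v < u$.

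The key step exploits compatibility of $\theta$ with the monoid operation $\circ$ by multiplying $p \mathrel{\theta} q$ on the left and right with two carefully chosen join-preserving maps. For $b \in C$, let $c_b$ be the map sending $0$ to $0$ and every $x > 0$ to $b$, and for $a \in C$, let $j^a$ be the map sending $x$ to $0$ when $x \le a$ and to $\infty$ when $x > a$. One checks that both families lie in $\res(\m{C})$; here the use of $x \le a$ rather than $x < a$ in the threshold of $j^a$ is precisely what secures join-preservation at points of $\m{C}$ that are suprema of strictly smaller elements. Precomposing with $c_{x_0}$ collapses every positive input to $x_0$, giving $p \circ c_{x_0} = c_v$ and $q \circ c_{x_0} = c_u$; postcomposing with $j^v$ then sends $v \mapsto 0$ and $u \mapsto \infty$, giving $j^v \circ c_v = \bot$ and $j^v \circ c_u = \top$. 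Since $\theta$ respects $\circ$, the relation $p \mathrel{\theta} q$ yields $j^v \circ p \circ c_{x_0} \mathrel{\theta} j^v \circ q \circ c_{x_0}$, that is, $\bot \mathrel{\theta} \top$, completing the argument.

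The routine parts are verifying that $c_b$ and $j^a$ preserve arbitrary joins and confirming the three composition identities, which are direct case distinctions on whether the argument is $0$ and whether it exceeds the relevant threshold. The conceptual crux, and the step I expect to be the main obstacle, is the design of these auxiliary maps: they must be chosen so that two-sided multiplication amplifies a single pointwise discrepancy $v < u$ all the way to the maximal gap between $\bot$ and $\top$, and so that they remain join-preserving uniformly over an arbitrary complete chain, including at its limit points. Beyond completeness of $\m{C}$, no finiteness or further structure is used, so the proof applies uniformly to every complete chain.
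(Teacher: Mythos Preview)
Your argument is correct and takes a genuinely different route from the paper's. The paper first invokes the standard fact that a congruence on a pointed residuated lattice is determined by the $\theta$-class of $\id$, so it starts from some $f \mathrel{\theta} \id$ with $f \neq \id$, splits into the cases $f(c) < c$ and $c < f(c)$, and in the first case uses the residual $\rd$ (computing $((g \circ f) \rd g) \mt \id$ versus $((g \circ \id) \rd g) \mt \id$ for a threshold map $g$) to force $\bot \mathrel{\theta} \id$; the second case is reduced to the first via $f \ld \id$. Your proof avoids both the residuals and the case split: you pass from an arbitrary nontrivial pair to an ordered pair $p < q$ via the lattice operations, then sandwich with the constant map $c_{x_0}$ on the right and the threshold map $j^v$ on the left to get $\bot \mathrel{\theta} \top$ directly. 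The paper's $g$ is essentially your $j^{f(c)}$, so the threshold idea is shared, but your use of the constant map on the right replaces the paper's use of $\rd$, making your argument more self-contained (no appeal to the $\id$-class characterization of residuated-lattice congruences) and slightly shorter. The paper's approach, on the other hand, illustrates how the residuated structure alone already suffices, without needing to manufacture a second auxiliary map.
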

\begin{proof}
Consider any $\Theta\in\mathrm{Con}(\Res(\m{C})){\setminus}\{\Delta_{\res(\m{C})}\}$. Since $\Theta\neq\Delta_{\res(\m{C})}$ and a congruence of a pointed residuated lattice is fully determined by the congruence class of its multiplicative unit (see~\cite{BT03}), there exists an $f\in\res(\m{C}){\setminus}\{\id\}$ such that $f\mathrel{\Theta}\id$. Choose any $c \in C$ such that $f(c) \neq c$. We consider the cases $f(c) < c$ and $c < f(c)$. Suppose first that $f(c) < c$ and define the map $g\in\res(\m{C})$ such that for each $d\in C$,
\[
g(d) := \begin{cases}
0 &\text{if } d\leq f(c); \\
\infty &\text{if } d>f(c).
\end{cases}
\]
Since $f\mathrel{\Theta}\id$ and $\Theta$ is a congruence, also $((g\circ f) \rd g) \mt \id\mathrel{\Theta}((g\circ \id) \rd g)\mt\id$. Observe that
\[
((g \circ f) \rd g)(\infty) = ((g\circ f) \rd g)(g(c)) =(((g\circ f) \rd g)\circ g)(c) \leq (g \circ f)(c) = g(f(c)) = 0,
\]
so $\bot = (g\circ f) \rd g =  ((g\circ f) \rd g) \mt \id$. Also, $((g\circ \id) \rd g) \mt \id = (g \rd g) \mt \id  = \id$, so $\bot\mathrel{\Theta}\id$ and  $\top=\bot\ld\id\mathrel{\Theta} \id \ld \id =  \id$. Hence $\top\mathrel{\Theta}\bot$. But $\Theta$ is a lattice congruence, so its congruence classes are convex and $\Theta=\nabla_{\res(\m{C})}$. Suppose next that $c<f(c)$. Then $f((f\ld \id)(c)) = (f\circ(f\ld\id))(c)\leq c < f(c)$, and, since $f$ is order-preserving and $\m{C}$ is a chain, $(f\ld id)(c) < c$. But then, since  $f\ld \id \mathrel{\Theta}\id \ld \id = \id$,  as in the previous case, $\Theta=\nabla_{\res(\m{C})}$.
\end{proof}

Let us focus our attention now on the special case of the successor ordinal $\som=\om\cup\{\om\}$. For convenience, we call both the pointed residuated lattice $\Res(\som)$ and the corresponding term-equivalent involutive residuated lattice, the {\em time warp algebra} $\m{W}$, referring to members of $W$, i.e., join-preserving maps on $\som$, as \emph{time warps}. Clearly, a map $f\colon\som\to\som$ is a time warp if, and only if, it is order-preserving and satisfies $f(0) = 0$ and $f(\om) = \bigjn\{f(n) \mid n\in \om\}$. As motivation for subsequent sections, we will show below that equational reasoning in $\m{W}$ cannot be checked by considering finite members of the variety generated by this algebra.

Observe first that $\dc:=\inv{\id}\in W$ is the `predecessor' time warp satisfying for  $m\in\som$,
\[
\dc(m) =  \bigjn \{n \in\om \mid n < m\} =  \begin{cases}
m & \text{if } m \in \{0,\om \};\\
m-1 & \text{otherwise.}
\end{cases}
\]
Clearly, $\dc<\id$, so $\WarpA$ satisfies the equation $\ut' \leq\ut$.

The involution operation can be described using $\dc$ as follows.

\begin{lem}\label[lem]{lem:invf}
For any time warp $f$ and $m\in \som$,
\[
\inv{f}(m) = \bigmt \{ \dc(n) \mid n \in \som\text{ and }\, m \leq f(n)\}.
\]
\end{lem}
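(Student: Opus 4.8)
The plan is to unfold the definition of the involution via residuation and compute the pointwise value directly. Recall that $\inv{f} = f \ld \dc$, and that the left residual in a pointed residuated lattice is given explicitly by $f \ld \dc = \bigjn\{g \in \res(\som) \mid f \circ g \le \dc\}$. So I would begin by fixing $m \in \som$ and analyzing what the join of all such $g$ evaluates to at $m$. Since joins in $\res(\som)$ are computed pointwise, $\inv{f}(m) = \bigjn\{g(m) \mid f \circ g \le \dc\}$, and the task reduces to identifying this supremum and matching it with the claimed meet $\bigmt\{\dc(n) \mid n \in \som \text{ and } m \le f(n)\}$.

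First I would prove the inequality $\inv{f}(m) \le \bigmt\{\dc(n) \mid m \le f(n)\}$. Take any $g$ with $f \circ g \le \dc$ and any $n$ with $m \le f(n)$; I want $g(m) \le \dc(n)$. Since $g$ is order-preserving and $m \le f(n)$, monotonicity should let me relate $g(m)$ to $g(f(n))$, but the composition goes the wrong way, so instead I would use that $f \circ g \le \dc$ means $f(g(k)) \le \dc(k)$ for all $k$, together with the key fact that $f$ and $g$ are join-preserving maps on a chain. The cleanest route is the adjunction characterization: $f \circ g \le \dc$ iff $g \le f \ld \dc = \inv f$, which is circular, so I would instead argue semantically that if $g(m) > \dc(n)$ then $g(m) \ge n$ (using that $\dc(n)$ is the predecessor), whence $f(g(m)) \ge f(n) \ge m$, contradicting $f(g(m)) \le \dc(m) < m$ when $m \notin \{0,\om\}$, with the boundary cases $m \in \{0,\om\}$ handled separately.

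For the reverse inequality I would exhibit a specific join-preserving witness $g$ realizing the bound. Set $k := \bigmt\{\dc(n) \mid m \le f(n)\}$ and define $g$ to be the join-preserving map sending $x \ge m$ to a value at least $k$ while keeping $f \circ g \le \dc$; concretely the step map that is $0$ below $m$ and $k$ at and above $m$ is a natural candidate, and I would verify it lies in $\res(\som)$ and satisfies $f \circ g \le \dc$ by checking the defining inequality $f(g(x)) \le \dc(x)$ at each $x$, using that $m \le f(n)$ forces $\dc(n) \ge k$ by definition of the meet. Showing $g(m) = k$ then yields $\inv{f}(m) \ge k$.

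The main obstacle will be the careful handling of the limit point $\om$ and the fixed points $\{0,\om\}$ of $\dc$, where the strict-predecessor reasoning degenerates: I must separately confirm that $\bigmt$ over an empty index set (when no $n$ satisfies $m \le f(n)$) yields $\om$ and matches $\inv{f}(m)$, and that join-preservation at $\om$, i.e. $g(\om) = \bigjn_n g(n)$, is respected by the witness map. Once the boundary bookkeeping is settled, the two inequalities combine to give the stated formula.
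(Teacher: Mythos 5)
Your strategy---unfolding $\inv{f}=f\ld\dc$ as the pointwise join $\inv{f}(m)=\bigjn\{g(m)\mid f\circ g\le\dc\}$, bounding it above by a contradiction argument, and realizing the bound with an explicit step-map witness---is genuinely different from the paper's proof, which defines $h(m)$ to be the right-hand side, verifies that $h$ is itself a time warp, and then squeezes $h=\inv{f}$ via residuation ($hf\le\dc$ gives $h\le\dc\rd f=\inv{f}$; $\inv{f}(f(n))\le\dc(n)$ gives the converse on $\om{\setminus}\{0\}$). For finite $m\ge 1$ your plan does go through: the upper-bound argument is sound (if $g(m)>\dc(n)$ with $n$ finite then $g(m)\ge n$, so $f(g(m))\ge f(n)\ge m$, contradicting $f(g(m))\le\dc(m)<m$), and the step map (value $0$ below $m$, value $k$ from $m$ on) really is join-preserving and satisfies $f\circ g\le\dc$, because $k=\dc(n_0)$ for $n_0=\min\{n\in\som\mid m\le f(n)\}$ (or $k=\om$ when that set is empty, which forces $f(\om)<m$), and in either case $f(k)<m$, hence $f(g(x))\le\dc(m)\le\dc(x)$ for $x\ge m$.

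The genuine gap is at $m=\om$, exactly where you place the ``main obstacle'': your witness \emph{fails} there, rather than merely needing confirmation. If $m=\om$ and $k>0$, the step map has $g(\om)=k$ but $g(x)=0$ for all $x\in\om$, so $g(\om)\neq\bigjn\{g(x)\mid x\in\om\}$ and $g\notin\res(\som)$; no patched step map can work, since any $g\in\res(\som)$ has $g(\om)$ determined by its finite values. Repairing this requires a continuity argument: use that $\inv{f}\in\res(\som)$, so $\inv{f}(\om)=\bigjn\{\inv{f}(m)\mid m\in\om\}$, invoke the already-proved finite case, and then show $\bigjn_{m\in\om}\bigmt\{\dc(n)\mid m\le f(n)\}=\bigmt\{\dc(n)\mid \om\le f(n)\}$. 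That last identity, with its case split on whether $f$ attains the value $\om$ at a finite argument and whether $f(\om)<\om$, is precisely the nontrivial verification the paper performs when it checks that $h$ is a time warp. (Your upper bound at $m=\om$ also needs an extra step: from $g(\om)\ge n$ with $n$ finite you must extract a finite $k$ with $g(k)\ge n$ before running the contradiction, which again uses join-preservation of $g$.) So the plan is salvageable, but the $\om$-boundary work you defer is not bookkeeping; it is the mathematical core of the lemma, the same core the paper's proof isolates.
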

\begin{proof}
Let $h$ be the function defined by $h(m) \defeq\bigmt \{ \dc(n) \mid n \in \som \text{ and }\,m \leq f(n)\}$ for each $m\in\som$. Clearly, $h$ is order-preserving and satisfies $h(0)=0$. Hence, to show that $h$ is a time warp, it remains to check that $h(\om)=\bigvee\{h(m)\mid m\in\om\}$. To this end, suppose first  that $h(\om) = \om$. Then $f(n) < \om$, for all $n \in \om$. If also $f(\om) < \om$, then $h(m) = \om$,  for each $m > f(\om)$, yielding $h(\om)= \om = \bigvee\{h(m)\mid m\in\om\}$. If $f(\om) = \om$, then, since $\om = f(\om) = \bigjn \{ f(n) \mid n \in \om\}$, there exists, for each $m \in \om$, an $n\in \om$ with $m\leq f(n) < \om$, yielding again $h(\om)= \om = \bigvee\{h(m)\mid m\in\om\}$. Suppose now that $h(\om) < \om$. Then there exists a minimal $n\in \om{\setminus} \{0\}$ such that $f(n) = \om$, and hence $h(m) = p(n)$, for each $m > f(n-1)$, yielding $h(\om)= p(n) = \bigvee\{h(n)\mid n\in\om\}$.
 
 Moreover, $hf\le \dc$, so $h\le \dc\rd f =  \inv{f}$. Now consider any $m\in \om{\setminus}\{0\}$. If $n\in\som$ satisfies $m\le f(n)$, then $\inv{f}(m)\leq \inv{f}(f(n)) \le \dc(n)$. So $\inv{f}(m)\le  h(m)$. Hence $h = \inv{f}$, since time warps are determined by their values on $\om{\setminus}\{0\}$.
\end{proof}

Observe next that $p^{k+1} < p^k$ for each $k\in\mathbb{N}$ and hence that the $\emptyset$-generated subalgebra of $\WarpA$ is infinite. So $\WarpA$ has no finite subalgebras. Moreover, $\WarpA$ is simple, by \Cref{prop:simple}, so any quotient of $\WarpA$ is either trivial or isomorphic to $\WarpA$. Indeed, we do not know if the variety generated by $\WarpA$ contains {\em any} non-trivial finite algebra, but can show at least that it cannot contain $\Res(\m{C})$ for any finite chain $\m{C}$ and does not have the finite model property.\footnote{Note that the involution-free reduct of $\WarpA$ generates the variety of {\em distributive $\ell$-monoids}, which does have the finite model property; indeed, an equation is satisfied by this reduct of $\WarpA$ if, and only if, it is satisfied by the distributive $\ell$-monoid of join-preserving maps on any finite chain~\cite{CGMS21}.}

To this end, observe first that the element $\dc$ has a right inverse; that is, $\dc \circ s = \id$, where $s$ is the `successor' time warp satisfying for each $m\in\som$,
\[
s \colon \som \to \som;\quad
m\mapsto \begin{cases}
m & \text{if } m \in \{ 0, \om\};\\
m+1 & \text{otherwise.}
\end{cases}
\]
Note that $s = \dc \ld \id$, since $\dc \ld \id = \inv{( \inv{\id} \circ \dc)} = \inv{(\dc\circ \dc)}$ and, by \Cref{lem:invf}, for each $m\in \som$,
\begin{align*}
\inv{(\dc\circ \dc)}(m) 
& = \bigmt \{ \dc(n) \mid n \in \som\text{ and }\, m \leq (\dc\circ \dc)(n)\}\\
& = \bigmt \{ n \mid n \in \som\text{ and }\, m \leq \dc(n)\}\\
& = s(m).
\end{align*}
Hence $\dc\circ (\dc \ld \id) = \id$, and $\WarpA$ satisfies the equation $\ut\eq\ut' \pd (\ut'\ld\ut)$.

\begin{prop}
The variety generated by $\WarpA$ does not have the finite model property and does not contain $\Res(\m{C})$ for any finite chain $\m{C}$.
\end{prop}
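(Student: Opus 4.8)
The plan is to exploit the equation $\ut\eq\ut'\pd(\ut'\ld\ut)$ just established, which says $\dc\circ s=\id$ where $s=\dc\ld\id$ is a right inverse of $\dc$. The key observation is that this equation forces $\dc$ to be a surjective (epic) element of the monoid, and in a finite structure this has strong consequences. First I would show that no finite $\Res(\m{C})$ can satisfy the conjunction of the two equations $\ut'\le\ut$ and $\ut\eq\ut'\pd(\ut'\ld\ut)$. In $\Res(\m{C})$ with $\m{C}$ a finite chain of size $k$, the element $\dc=\inv{\id}$ is strictly below $\id$ (by the same argument as for $\WarpA$, since $\dc(x)<x$ whenever $x$ is neither least nor greatest). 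But a strictly decreasing self-map of a finite chain cannot have a right inverse: if $\dc\circ s=\id$, then $\dc$ must be surjective, hence on a finite chain also injective, hence a bijection, and the only order-preserving join-preserving bijection on a finite chain is $\id$ itself, contradicting $\dc<\id$. This rules out every finite $\Res(\m{C})$.

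Next I would leverage the fact that the variety generated by $\WarpA$ satisfies both equations. Since $\WarpA\models\ut'\le\ut$ and $\WarpA\models\ut\eq\ut'\pd(\ut'\ld\ut)$, these equations hold throughout the variety $\mathbb{V}(\WarpA)$ it generates, in particular in every member of the variety. Because no finite $\Res(\m{C})$ validates both equations, no finite $\Res(\m{C})$ lies in $\mathbb{V}(\WarpA)$; this gives the second clause of the proposition directly.

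For the failure of the finite model property, I would argue that the two equations together have no \emph{finite} model at all within the variety, or more precisely that any nontrivial algebra satisfying them must be infinite. The cleanest route is to show that in any involutive residuated lattice satisfying $\ut'<\ut$ together with $\ut=\ut'\pd(\ut'\ld\ut)$, the powers $(\ut')^k$ form a strictly descending chain, exactly mirroring the observation $p^{k+1}<p^k$ made for $\WarpA$. Writing $a=\ut'$ and $b=\ut'\ld\ut$, the identity $ab=\ut$ forces $a^{k+1}b^{k+1}=\ut$ for all $k$, and if some $a^{k+1}=a^k$ held one could cancel using the right inverse $b$ to derive $a=\ut$, contradicting $a<\ut$. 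Hence all the $a^k$ are distinct and the algebra is infinite. Since the equational theory of $\mathbb{V}(\WarpA)$ includes both equations but, as just shown, these equations are refutable in no finite member, the variety lacks the finite model property.

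The main obstacle I anticipate is the finite-model-property argument rather than the exclusion of the finite chains $\Res(\m{C})$: one must be careful that failure of the finite model property requires exhibiting a \emph{specific equation} valid in $\WarpA$ but refuted in every finite algebra of the variety, not merely a valid infinite witness. The delicate point is verifying that $a^{k+1}=a^k$ can genuinely be cancelled to yield $a=\ut$ using only the monoid identity $ab=\ut$ and associativity—that is, that right-invertibility of $a$ propagates through powers so that $a^k c=a^k d$ implies $c=d$. I would check this cancellation carefully, since it is the crux: it transforms the merely existential statement ``there is an infinite descending chain'' into the equational statement ``$(\ut')^{k}=(\ut')^{k+1}$ fails,'' which is what the finite model property concerns. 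Once the strict descent of the powers of $\ut'$ is pinned to a genuine equational failure in every finite model, both claims of the proposition follow.
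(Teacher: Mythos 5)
Your proposal is correct, and its core coincides with the paper's proof. For the finite model property, your cancellation step --- from $a^{k+1}=a^k$ and $ab=\ut$, right-multiply by $b^k$ and use $a^kb^k=\ut$ to conclude $a=\ut$ --- is exactly the paper's computation $\ut'=(\ut')^{k+1}(\ut'\ld\ut)^k=(\ut')^k(\ut'\ld\ut)^k=\ut$, organized contrapositively; the cancellation you flagged as the crux does go through, for precisely this reason. Where you genuinely differ is in excluding $\Res(\m{C})$: the paper deduces this from the same finite-member lemma (every finite member of the variety satisfies $\ut'\eq\ut$, and $\Res(\m{C})$ is a finite algebra refuting that equation), whereas you argue semantically that $\Res(\m{C})$ refutes $\ut\eq\ut'\pd(\ut'\ld\ut)$ itself: a right inverse would make $\dc$ surjective, hence bijective on a finite set, hence equal to $\id$, contradicting $\dc<\id$. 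Both routes work; yours is more concrete, while the paper's makes both claims fall out of a single lemma. One error to correct in your write-up: you state the finite model property backwards. Its failure requires an equation that is \emph{refuted} in $\WarpA$ (hence not valid in the variety) yet \emph{valid} in every finite member of the variety --- not, as you write, one ``valid in $\WarpA$ but refuted in every finite algebra of the variety''. Your argument does supply such an equation, namely $\ut'\eq\ut$: no finite member can have $\ut'$ strictly below $\ut$ (else it would be infinite, by your descending-chain argument), so all finite members satisfy $\ut'\eq\ut$, which $\WarpA$ refutes since $\dc<\id$. With the conclusion reassembled this way, the final sentence of your third paragraph (currently a non sequitur, since the two displayed equations hold in \emph{all} members of the variety, finite or not) becomes a correct proof.
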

\begin{proof}
Let $\langle L,\mt,\jn,\pd,\mathop{'},\ut \rangle$ be any finite member of the variety generated by $\WarpA$. Then, since $\ut' \leq \ut$ and $\bf L$ is finite, there exists a $k \in \mathbb{N}$ such that $(\ut')^k = (\ut')^{k+1}$. Moreover, $\ut'=\ut'\ut=(\ut')^2\pd(\ut'\ld\ut)$, since $\ut=\ut'\pd(\ut'\ld\ut)$ and $\ut$ is the multiplicative unit of $\bf L$, and hence, iterating this step, 
\[
\ut' = (\ut')^{k+1}(\ut' \ld \ut)^k = (\ut')^k(\ut' \ld \ut)^k =  \ut.
\]
The equation $\ut'\eq\ut$ is therefore satisfied by all the finite members of the variety generated by $\WarpA$, but not by $\WarpA$ itself, since $\dc < \id$. Moreover, since $\Res(\m{C})$ is finite for any finite chain $\m{C}$ and does not satisfy $\ut'\eq\ut$, it cannot belong to  the variety generated by $\WarpA$.
\end{proof}

Although the variety generated by $\WarpA$ does not have the finite model property, we will show in this paper that it is generated by a subalgebra of $\WarpA$ consisting of time warps that have a simple finite description. Let us call a time warp $f\in W$ {\em eventually constant}  if there exists an $m\in\om$ such that $f(n)=f(m)$ for all $n\in\om$ with $n\ge m$,  {\em eventually linear} if  there exist an $m\in\om$ and a $k\in\mathbb{Z}$ such that $f(n)=n+k$ for all $n\in\om$ with $n\ge m$, and {\em regular} if it is eventually constant or eventually linear. Equivalently, a time warp $f\in W$ is regular if, and only if, there exist $m\in\om$, $l\in\{0,1\}$, and $k\in\mathbb{Z} \cup \{ \om \}$ such that $f(n)=ln+k$ for all $n\in\om$ with $n\ge m$.

\begin{prop}\label[prop]{prop:regular-subalgebra}
The set of regular time warps forms a subalgebra $\m{R}$ of $\WarpA$.
\end{prop}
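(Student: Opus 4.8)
To show that the set $R$ of regular time warps is a subalgebra, I must verify that $R$ contains the constants $\id$ and $\dc$ and is closed under the operations $\mt$, $\jn$, $\circ$, and the involution $\inv{\phantom{x}}$. The constant $\id$ is eventually linear (with $l=1$, $k=0$) and $\dc$ is eventually linear (with $l=1$, $k=-1$), so both lie in $R$. For the binary operations, I would work throughout with the characterization that $f \in W$ is regular if, and only if, there exist $m \in \om$, $l \in \{0,1\}$, and $k \in \mathbb{Z} \cup \{\om\}$ with $f(n) = ln + k$ for all $n \in \om$ with $n \ge m$. The strategy is: given finitely many regular time warps, there is a common threshold $m$ beyond which each is simultaneously of this eventual form, and I check that applying the operation produces something again of this form past some (possibly larger) threshold.

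First I would handle $\mt$ and $\jn$. Past a common threshold both inputs are of the shape $l_in + k_i$, and on the tail the pointwise meet or join of two such affine-or-constant functions is again eventually one of them (since two distinct lines or a line and a horizontal either cross at most once or a constant eventually dominates/is dominated), so the result is eventually constant or eventually linear. Next, composition: if $f(n) = l_1 n + k_1$ and $g(n) = l_2 n + k_2$ on their tails, then for large $n$ the value $g(n)$ is also large (unless $g$ is eventually constant, in which case $f(g(n))$ is eventually constant), so $f(g(n)) = l_1(l_2 n + k_2) + k_1 = l_1 l_2 n + (l_1 k_2 + k_1)$, which is again of the required form with slope $l_1 l_2 \in \{0,1\}$; the degenerate cases where a tail value is $\om$ are absorbed into the $k = \om$ case.

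The main obstacle is closure under the involution $\inv{\phantom{x}}$. Here I would lean on the explicit formula of \Cref{lem:invf}, namely $\inv{f}(m) = \bigmt\{\dc(n) \mid n \in \som \text{ and } m \le f(n)\}$, and analyze it by cases on the eventual behavior of $f$. If $f$ is eventually constant, say $f(n) = c$ for $n \ge m_0$, then for $m > c$ the set $\{n \mid m \le f(n)\}$ is eventually empty on $\om$, forcing $\inv{f}$ to be eventually constant (equal to $\dc$ of the least such $n$, or to $\om$). If $f$ is eventually linear with slope $1$, say $f(n) = n + k$ for $n \ge m_0$, then for large $m$ the least $n$ with $m \le f(n)$ is $n = m - k$, whence $\inv{f}(m) = \dc(m - k) = m - k - 1$ on the tail, again eventually linear with slope $1$. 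The care needed is in tracking how the threshold shifts and in treating the boundary values at $0$ and $\om$, but in each case the tail of $\inv{f}$ falls into the regular form.

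**Organization.** I would present this as a short verification: state the common-threshold reduction once, dispatch $\mt, \jn, \circ$ quickly by the affine-function reasoning above, and then devote the bulk of the argument to the two (or three) cases of the involution via \Cref{lem:invf}. No single step is deep, but the involution case requires the most bookkeeping and is where an oversight is most likely, so I would write it out carefully while keeping the lattice and composition cases brief.
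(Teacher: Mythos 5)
Your proposal is correct and follows essentially the same route as the paper: verify closure operation by operation, dispatching $\mt$, $\jn$, and $\circ$ by elementary analysis of the eventual affine/constant tails, and establishing closure under the involution via \Cref{lem:invf} with a case split on whether $f$ is eventually constant or eventually linear. The one point needing care is your eventually-constant case: when the eventual value is $\om$ the set $\{n \mid m \le f(n)\}$ is never empty (so "eventually empty" is not the right justification there), and, as the paper does, one must split into the subcase $f(m)<\om$, where the meet is empty and $\inv{f}$ is eventually $\om$, and the subcase $f(m)=\om$, where $\inv{f}$ is eventually the constant $\bigmt\{\dc(n) \mid n\in\som,\ f(n)=\om\}$ --- your parenthetical conclusion already records both outcomes, so this is a matter of bookkeeping rather than a missing idea.
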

\begin{proof}
Clearly, $\id$ is eventually linear, and hence regular. Suppose that $f,g\in W$ are regular. It is easy to see that $f\mt g$ and $f\jn g$ are then also regular. If $g$ is eventually constant, then there exists an $m\in\om$ such that $g(n)=g(m)$ for all $n\in\om$ with $n\ge m$, and hence also $f(g(n))=f(g(m))$ for all $n\in\om$ with $n\ge m$, so $f\circ g$ is eventually constant. Suppose then that $g$ is eventually linear, that is,  there exist an $m\in\om$ and a $k\in\mathbb{Z}$ such that $g(n)=n+k$ for all  $n\in\om$ with $n\ge m$. If $f$ is eventually constant, then there exists an $l\in\om$ such that $f(n)=f(l)$ for all $n\in\om$ with $n\ge l$, and hence $f(g(n))=f(l)$ for all $n\in\om$ with $n\ge \max(m,l-k)$, that is, $f\circ g$ is eventually constant. If $f$ is eventually linear, then  there exist an $l\in\om$ and a $r\in\mathbb{Z}$ such that $f(n)=n+r$ for all  $n\in\om$ with $n\ge l$, and hence $f(g(n))=n+k+r$ for all $n\in\om$ with $n\ge \max(m,l+k)$. 

It remains to show that $\inv{f}$ is regular for $f$ regular. Suppose first that $f$ is eventually constant, that is, there exists an $m\in\om$ such that $f(n)=f(m)$ for all $n\in\om$ with $n\ge m$.  If $f(m) < \om$, then for each $k\in \om$ with $k\ge f(m)+1$, the set $\{ \dc(n) \mid n \in \som \text{ and } k \leq f(n) \}$ is empty and $\inv{f}(k) = \om$, by \Cref{lem:invf}, i.e., $\inv{f}$ is eventually constant. Otherwise, $f(m) = \om$ and  for all $l\geq 1 + \max \{ f(n) \mid n\in \om,  f(n) < \om \}$,  \Cref{lem:invf} yields
\[
\inv{f}(l) =\bigmt\{ \dc(n) \mid n \in \som \text{ and }\, l \leq f(n) \} =\bigmt\{ \dc(n) \mid n \in \som \text{ and }\, f(n) = \om \},
\]
i.e., $\inv{f}$ is eventually constant. Finally, suppose that $f$ is eventually linear, that is, there exists an $m \in \om$ and a $k\in \mathbb{Z}$ such that $f(n) = n+k$ for all $n\in \om$ with $n\geq m$. Then every $l \in \om$ with $l\geq m+k +1$ lies in the image of $f$ and  $n = l - k$ is the unique solution for the equation $l = f(n)$, so $\inv{f}(l) = l - (k+1)$, by \Cref{lem:invf}, i.e., $\inv{f}$ is eventually linear.
\end{proof}

We conclude this section by providing an explicit description of the involution operation on time warps that will play a crucial role in the decidability proof in the next section. For any time warp $f\in W$, let $\lastw(f)$ denote the smallest $m\in\som$ such that $f$ takes the same value on $m$ (and hence all elements greater than $m$) as $\om$. More formally:
\[
\lastw(f) \coloneqq \min \{m \in \som \mid f(m) = f(\om)\}.
\]
Observe that $\lastw(f)<\om$ if, and only if, $f$ is eventually constant. Moreover, we have $\lastw(f)=(id\rd f)f(\om)$, since $\id \rd f = \inv{(f \circ \inv{\id})} = \inv{(f\circ \dc)}$ and, by \Cref{lem:invf}, 
\[
\inv{(f\circ \dc)}(f(\om)) = \bigmt\{ \dc(n) \mid n \in \som \text{ and }\, f(\om) \leq f(\dc(n)) \}  = \min \{m \in \som \mid f(m) = f(\om)\}.
\]
The behaviour of this operation with respect to compositions and involutions of time warps is easily described as follows.

\begin{lem}\label[lem]{lemma:last-properties}
For any time warps $f$ and $g$,
\begin{align*}
\lastw(fg) = \om & \:\iff\: \lastw(g) = \lastw(f) = \om\\
\lastw(\inv{f}) = \om & \:\iff\: \lastw(f) = \om.
\end{align*}
\end{lem}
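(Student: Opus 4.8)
The plan is to first replace the condition $\lastw(h) = \om$ by a more tractable pointwise criterion, and then push that criterion through composition and involution. Recall from the remark preceding the statement that $\lastw(h) < \om$ holds exactly when $h$ is eventually constant. Since $h$ is order-preserving with $h(\om) = \bigjn\{h(n) \mid n \in \om\}$, failing to be eventually constant means $h(m) < h(\om)$ for every finite $m$; and were $h(\om)$ finite, all these values would satisfy $h(m) \le h(\om) - 1$, whence $\bigjn\{h(n) \mid n \in \om\} \le h(\om) - 1 < h(\om)$, a contradiction. So I would first establish, for every time warp $h$, the reformulation
\[
\lastw(h) = \om \quad\Longleftrightarrow\quad h(\om) = \om \ \text{ and }\ h(n) < \om \text{ for all } n \in \om.
\]
Call these two conditions (A) and (B). Both equivalences in the lemma then reduce to tracking (A) and (B).

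For the composition equivalence, the backward direction is immediate: if $f$ and $g$ both satisfy (A) and (B), then $(fg)(\om) = f(g(\om)) = f(\om) = \om$ yields (A) for $fg$, and for finite $n$ we have $g(n) < \om$, hence $f(g(n)) < \om$, yielding (B). The forward direction carries the real content. Assuming $fg$ satisfies (A) and (B), condition (A) for $f$ is immediate from monotonicity, since $f(\om) \ge f(g(\om)) = \om$. The key step is to rule out that $f$ sends some finite argument to $\om$: if $n_0$ were the least such argument (note $n_0 \ge 1$ as $f(0) = 0$), then (B) for $fg$ would force $g(n) < n_0$ for every finite $n$, since $g(n) \ge n_0$ would give $f(g(n)) \ge f(n_0) = \om$; hence $g(\om) = \bigjn\{g(n) \mid n \in \om\} \le n_0 - 1 < \om$, and therefore $f(g(\om)) \le f(n_0 - 1) < \om$, contradicting (A) for $fg$. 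This gives (B) for $f$, after which (A) and (B) for $g$ follow at once: $g(\om) = \om$ because $f(g(\om)) = \om$ while $f$ is now finite on finite arguments, and $g(n) < \om$ for finite $n$ because otherwise $f(g(n)) = f(\om) = \om$ would contradict (B) for $fg$.

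For the involution equivalence I would compute $\inv{f}(\om)$ and $\inv{f}(m)$ directly from \Cref{lem:invf}. One finds $\inv{f}(\om) = \bigmt\{\dc(n) \mid n \in \som,\ f(n) = \om\}$, and this equals $\om$ precisely when no finite $n$ satisfies $f(n) = \om$: the empty meet is $\om$, while a least finite $n_0$ with $f(n_0) = \om$ forces $\inv{f}(\om) = n_0 - 1 < \om$. Thus (A) for $\inv{f}$ is equivalent to (B) for $f$. Dually, for finite $m$ one has $\inv{f}(m) = \om$ iff the set $\{n \mid m \le f(n)\}$ is empty, i.e.\ iff $f(\om) < m$; hence $\inv{f}$ satisfies (B) iff $f(\om) = \om$, which is (A) for $f$. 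Since $\lastw(\inv{f}) = \om$ is the conjunction of (A) and (B) for $\inv{f}$, and this conjunction coincides with the conjunction of (B) and (A) for $f$, the desired equivalence with $\lastw(f) = \om$ follows.

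I expect the reformulation and the two involution computations to be routine, and the single genuinely delicate point to be the minimal-$n_0$ argument in the forward direction of the composition equivalence, where conditions (A) and (B) for $fg$ must be used \emph{together} with the join-preservation identity $g(\om) = \bigjn\{g(n) \mid n \in \om\}$ to produce the contradiction.
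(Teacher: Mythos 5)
Your proof is correct, but it is organized around a genuinely different decomposition than the paper's. The paper argues contrapositively: reading $\lastw(h)=\om$ as ``$h(m)<h(\om)$ for all $m\in\om$'', it shows for composition that $fg(m)=fg(\om)$ holds for some finite $m$ if, and only if, either $g(m)=g(\om)$ for some finite $m$, or $g(m)<g(\om)$ for all finite $m$ and $f(k)=f(\om)$ for some finite $k$ --- a case split on whether $g$ is eventually constant --- and for the involution it simply asserts, citing \Cref{lem:invf}, that $\inv{f}(m)=\inv{f}(\om)$ for some finite $m$ iff $f(m)=f(\om)$ for some finite $m$. You instead work positively, replacing $\lastw(h)=\om$ by the conjunction of (A) $h(\om)=\om$ and (B) $h(n)<\om$ for all $n\in\om$, and push these two conditions through the operations: your least-$n_0$ contradiction plays the role of the paper's case split (both ultimately rest on join-preservation ruling out a finite value of $h(\om)$ for a non-eventually-constant $h$), and your involution case becomes an explicit computation from \Cref{lem:invf} showing that the involution exchanges (A) and (B) --- precisely the content the paper compresses into ``by \Cref{lem:invf}'', and also recoverable from the second and third equivalences of \Cref{prop:inv}. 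One phrase of yours is loose but harmless: in computing $\inv{f}(m)$ for finite $m$, ``the meet is $\om$ iff the set is empty'' overlooks that the set $\{\om\}$ would also yield meet $\om$; join-preservation excludes that case (this is exactly why $\inv{f}(m)=\om$ iff $f(\om)<m$, as in \Cref{prop:inv}), so the conclusion stands. On balance, the paper's route is shorter, while yours makes the mechanism explicit and isolates a reusable finiteness criterion that cleanly explains both equivalences at once.
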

\begin{proof}
For the first equivalence, observe that $\lastw(fg)=\om$ if, and only if, $fg(m)<fg(\om)$ for all $m\in\om$. However, $fg(m)=fg(\om)$ for some $m\in\om$ if, and only if, either $g(m)=g(\om)$ for some $m\in\om$, or $g(m)<g(\om)$ for all $m\in\om$ and $f(k)=f(\om)$ for some $k\in\om$, which is equivalent to $\lastw(g)<\om$ or $\lastw(f)<\om$. For the second equivalence, observe that $\lastw(\inv{f}) = \om$ if, and only if, $\inv{f}(m)<\inv{f}(\om)$ for all $m\in\om$. But $\inv{f}(m)=\inv{f}(\om)$ for some $m\in\om$ if, and only if, $f(m)=f(\om)$ for  some $m\in\om$, by Lemma~\ref{lem:invf}, which is equivalent to $\lastw(f)<\om$.
\end{proof}

\noindent
Finally, we are able to provide the promised explicit description of the involution operation on time warps.

\begin{prop}\label[prop]{prop:inv}
For any time warp $f$, $n\in \om{\setminus}\{0\}$, and $m\in \om$,
\begin{align*}
\inv{f}(n) = m &\:\iff\:  f(m) < n \leq f(m+1) \\
\inv{f}(\om) = m &\:\iff\:  f(m) < \om = f(m+1) \\
\inv{f}(n) = \om &\:\iff\: f(\om) < n \\
\inv{f}(\om) = \om &\:\iff\: f(\om) < \om\,\text{ or }\,\lastw(f) = \om.
\end{align*}
\end{prop}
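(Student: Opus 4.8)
The plan is to read off $\inv{f}$ directly from the formula in \Cref{lem:invf}, namely $\inv{f}(x) = \bigmt\{\dc(k) \mid k \in \som \text{ and } x \le f(k)\}$, and then to match the computed value against the four right-hand conditions, which I will check are pairwise incompatible and jointly exhaustive on the two relevant domains. For each $x \in \som$, set $S_x := \{k \in \som \mid x \le f(k)\}$. Since $f$ is order-preserving, $S_x$ is upward closed; and since $\dc$ is itself order-preserving while $\som$ is well-ordered, the meet $\bigmt\{\dc(k) \mid k \in S_x\}$ equals $\dc(\min S_x)$ when $S_x \ne \emptyset$ and equals the top element $\om$ when $S_x = \emptyset$. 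Everything thus reduces to locating $\min S_x$ and applying $\dc$.

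First I would treat $x = n \in \om{\setminus}\{0\}$. Here $S_n = \emptyset$ exactly when $n > f(k)$ for all $k$, equivalently $f(\om) < n$ (using $f(\om) \ge f(k)$); this gives $\inv{f}(n) = \om$ and is the third equivalence. When $S_n \ne \emptyset$ I claim $\min S_n$ is a finite successor: it is nonzero because $f(0) = 0 < n$, and it is finite by join-preservation, for if $f(k) < n$ held for every finite $k$ then, $n$ being finite, the values $f(k)$ would all be $\le n-1$, whence $f(\om) = \bigjn_k f(k) \le n-1 < n$, contradicting $n \le f(\om)$. Writing $\min S_n = m+1$ then yields $\inv{f}(n) = \dc(m+1) = m$, and $m+1 = \min S_n$ says precisely $f(m) < n \le f(m+1)$, which is the first equivalence.

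The case $x = \om$ is analogous but its infimum may genuinely be $\om$. Now $S_\om = \{k \mid f(k) = \om\}$, which is empty exactly when $f(\om) < \om$, again giving $\inv{f}(\om) = \om$. When $S_\om \ne \emptyset$ it again avoids $0$, and I would split on whether $\min S_\om$ is finite or equal to $\om$: it equals $\om$ precisely when $f(\om) = \om$ while $f(k) < \om$ for all finite $k$, which is exactly $\lastw(f) = \om$ (the reverse implication using that a finite value of $f(\om)$ would be attained at a finite argument), and then $\dc(\om) = \om$ supplies the fourth equivalence; whereas $\min S_\om = m+1$ finite gives $\inv{f}(\om) = m$ with $f(m) < \om = f(m+1)$, the second equivalence. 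I would conclude by noting that on $\om{\setminus}\{0\}$ the conditions $f(\om) < n$ and $f(m) < n \le f(m+1)$ partition the possibilities, and likewise on $\{\om\}$, so each computed value lands in exactly one case and all four biconditionals follow.

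The main obstacle I expect is the finiteness argument for $\min S_n$: the real content of the first two equivalences is that a \emph{finite} input admits a finite witness, and this is exactly the point at which the defining join-preservation $f(\om) = \bigjn_n f(n)$ must be used, in contrast with the $x = \om$ case, where an infinite witness genuinely occurs and has to be identified with the condition $\lastw(f) = \om$.
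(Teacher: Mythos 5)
Your proposal is correct and takes essentially the same approach as the paper: both read all four equivalences directly off the formula of \Cref{lem:invf}, using that the meet over the upward-closed witness set $S_x$ is attained at its minimum and that join-preservation forces a finite witness when the input is finite. Your write-up simply makes explicit (via the $\min S_x$ analysis and the final exclusivity/exhaustivity check) the details that the paper's much terser proof leaves implicit.
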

\begin{proof}
Consider any $n\in\som{\setminus}\{0\}$ and $m\in\om$. For the first two equivalences, observe that, using \Cref{lem:invf} and the assumption that  $n\neq 0$,
\[
\inv{f}(n) = m \:\iff\: m = \bigmt\{\dc(k) \mid k \in \som \text{ and }\,n\leq f(k)\} \:\iff\: f(m) <n \leq f(m+1).
\]
For the second two equivalences, observe first that, using  \Cref{lem:invf},
\[
\inv{f}(n) = \om \:\iff\: \om = \bigmt \{ \dc(k) \mid k \in \som \text{ and }\,n \leq f(k) \} \:\iff\: f(k) < n\:\text{ for all $k\in \om$}.
\]
It follows that $\inv{f}(n) = \om \iff f(k) < n\,\text{ for all $k\in \om$}\iff f(\om) < n$, for any $n\in\om{\setminus}\{0\}$, and  $\inv{f}(\om) = \om \iff  f(k) < \om\:\text{ for all $k\in \om$}\iff f(\om) < \om\,\text{ or }\lastw(f) = \om$. \qedhere
\end{proof}

%%%%%%%%%%%%%%%%%%%%%%%%%%%%%%%%%%%%%%%%%%%%%%%%%%%%%%%%%%%%%%%%%%%

\section{Decidability via Diagrams}\label{sec:diagrams}

In this section, we turn our attention to the problem of deciding equations in the time warp algebra $\WarpA$. Observe first that for any equation $s\eq t$ with $s,t\in{\rm Tm}$, 
\[
\WarpA\models s\eq t
\:\iff\:
(\WarpA\models s\le t\,\text{ and }\,\WarpA\models t\le s)
\:\iff\:
\WarpA\models\ut\le(s\ld t)\mt(t\ld s).
\]
Moreover, by \Cref{lem:normal}, each $u\in{\rm Tm}$ is equivalent in $\WarpA$ to a meet of joins of basic terms. Hence, since $\WarpA\models\ut\le u_1\mt\cdots\mt u_m$ if, and only if, $\WarpA\models\ut\le u_i$, for each $i\in\{1,\dots,m\}$, we may restrict our attention to deciding equations of the form 
$\ut\le t_1\jn\cdots\jn t_n$, where  $\{t_1,\dots,t_n\}$ is any finite non-empty set of basic terms. More precisely:

\begin{prop}\label{prop:simpleform}
The equational theory of $\WarpA$ is decidable if, and only if, there exists an algorithm that decides $\WarpA\models \ut\le t_1\jn\cdots\jn t_n$ for any basic terms $t_1,\dots,t_n$.
\end{prop}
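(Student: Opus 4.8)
The statement to prove is an \emph{equivalence}, so I would establish the two implications separately.

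The forward direction is immediate: if the full equational theory of $\WarpA$ is decidable, then in particular one can decide every equation of the special form $\ut \le t_1 \jn \cdots \jn t_n$, since $\ut \le t_1 \jn \cdots \jn t_n$ abbreviates the equation $\ut \mt (t_1 \jn \cdots \jn t_n) \eq \ut$, which is a genuine equation between terms of $\m{Tm}$. Thus an algorithm for the general problem already supplies an algorithm for the restricted one.

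The substance is the converse. The plan is to assume we are given an algorithm deciding membership $\WarpA \models \ut \le t_1 \jn \cdots \jn t_n$ for arbitrary basic terms, and to reduce an \emph{arbitrary} equation $s \eq t$ to finitely many instances of this restricted form. The reduction proceeds in three steps, all of which are already prepared in the excerpt. First, I would use the displayed chain of equivalences at the start of the section to replace $s \eq t$ by a single inequation of the form $\WarpA \models \ut \le u$, where $u \coloneqq (s \ld t) \mt (t \ld s)$; this is justified by the definition of satisfaction together with the residuation property. Second, I would invoke \Cref{lem:normal}: since $\WarpA$ is a fully distributive involutive residuated lattice, the algorithm of that lemma computes from $u$ a meet of joins of basic terms, say $u$ is equivalent in $\WarpA$ to $\bigmt_{i=1}^m \bigjn_{j=1}^{n_i} t_{i,j}$ with each $t_{i,j}$ basic. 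Third, using the lattice fact that $\ut \le \bigmt_{i=1}^m v_i$ holds if, and only if, $\ut \le v_i$ holds for each $i$, the single inequation $\WarpA \models \ut \le u$ is equivalent to the conjunction of the $m$ inequations $\WarpA \models \ut \le \bigjn_{j=1}^{n_i} t_{i,j}$, each now in exactly the restricted form to which the hypothesised algorithm applies.

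Assembling these steps yields the decision procedure for the general case: given $s \eq t$, compute $u$, run the normal-form algorithm to obtain the basic terms $t_{i,j}$, and for each $i \in \{1,\dots,m\}$ call the assumed algorithm on $\ut \le \bigjn_{j=1}^{n_i} t_{i,j}$; answer that $\WarpA \models s \eq t$ holds precisely when all $m$ calls return affirmatively. Since the normal-form construction terminates and the number of resulting instances is finite, this is a genuine algorithm. I do not anticipate a serious obstacle here, as each ingredient is either elementary lattice reasoning or has been packaged into a lemma already available; the only point requiring care is bookkeeping, namely checking that the equivalences of satisfaction are used in the correct direction and that the reduction from a meet of inequations to a conjunction of inequations is phrased as an equivalence rather than a mere implication, so that both the decidability and its failure are correctly transferred.
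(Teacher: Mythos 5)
Your proposal is correct and follows essentially the same route as the paper, which states this proposition as a summary of exactly the reduction you describe: replace $s\eq t$ by $\ut\le(s\ld t)\mt(t\ld s)$ (with the residuals expressed via the involution), normalize to a meet of joins of basic terms using \Cref{lem:normal}, and split the meet into separate inequations $\ut\le\bigjn_{j=1}^{n_i}t_{i,j}$. Your explicit treatment of the (trivial) forward direction and the remark about phrasing each reduction step as an equivalence are sound bookkeeping points that the paper leaves implicit.
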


\noindent
To address this problem, we relate the existence of a counterexample to $\WarpA\models\ut\le t_1\jn\cdots\jn t_n$, where $t_1,\dots,t_n$ are basic terms, to the satisfiability of a corresponding first-order formula in $\som$, understood as an ordered structure with a decidable first-order theory. Such a counterexample is given by assigning time warps to the variables in $t_1,\dots,t_n$ to obtain time warps $\hat{t}_1,\dots,\hat{t}_n$ satisfying $\id\not\le\hat{t}_1\jn\cdots\jn \hat{t}_n$, that is, by a homomorphism $h\colon\m{Tm}\to\WarpA$ and element $k\in\som$ satisfying $k>\sem{t_i}_h(k)$, for each $i\in\{1,\dots,n\}$. The translation into a first-order formula is obtained in two steps. First, a `time variable' $\ka$ is introduced to stand for the unknown $k\in\som$ and finitely many `samples' are generated that correspond to other members of $\som$ used in the computation of $\sem{t_1}_h(k),\dots,\sem{t_n}_h(k)$. Second, finitely many quantifier-free formulas are defined that describe the relationships between samples according to the semantics of $\WarpA$. The required formula is then the existential closure of the conjunction of these quantifier-free formulas and a formula that expresses $\ka>\sem{t_i}_h(\ka)$ for each $i\in\{1,\dots,n\}$.

Let us fix a countably infinite set $\TermV$ of \emph{time variables}, denoted by the (possibly indexed) symbol $\ka$. We call any member of the language generated by the following grammar a \emph{sample}, where $t\in{\rm Tm}$ is any basic term and $\ka\in\TermV$ is any time variable:
  \[
    \Time \ni
    \al
    \Coloneqq
    \ka \mid \s{t}{\al} \mid \suc(\al) 
    \mid \last(t).
  \] 
Now let $\leadsto$ be the binary relation defined on the set of all samples by
  \begin{align*}
    \s{t}{\al}
    & \leadsto
    \al
    &
    \s{tu}{\al}
    & \leadsto 
    \s{t}{\s{u}{\al}}
    \\
    \suc(\al)
    & \leadsto
    \al
    &
      \s{\invt{t}}{\al}
    & \leadsto
    \s{t}{\s{\invt{t}}{\al}}
    \\
      \s{t}{\al}
    & \leadsto
    \s{t}{\last(t)}
   &    \s{\invt{t}}{\al}
    & \leadsto
    \s{t}{\suc(\s{\invt{t}}{\al})},
  \end{align*}
and let  $\leadsto^*$ denote the reflexive transitive closure of this relation. We call a sample set (i.e., set of samples) $\De$ \emph{saturated} if whenever~$\al \in \De$ and~$\al \leadsto \be$, also~$\be \in \De$, and define the~\emph{saturation} of a sample set~$\De$ (analogously to the Fischer-Ladner closure of formulas in Propositional Dynamic Logic \cite{Fischer1979}) as
\[
\sat{\De}\defeq\{\be\in\Time\mid \al \leadsto^* \be\text{ for some }\al \in \De\}.
 \]
 Crucially, this definition yields the following property:

\begin{lem}\label[lem]{l:fin-sat}
The saturation of a finite sample set is finite.
\end{lem}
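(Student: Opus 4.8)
The statement to prove is that $\sat{\De}$ is finite whenever the sample set~$\De$ is finite. The plan is to bound, separately and uniformly, the three ingredients out of which every sample is built: the basic-term labels~$t$ occurring in positions~$\s{t}{-}$ and~$\last(t)$, the \emph{bases} ($\ka$ or~$\last(t)$ at the bottom of a sample), and the number of stacked~$\s{\cdot}{-}$ and~$\suc$ operations, which I will call the \emph{height}. The first two are easy. Inspecting the six generating clauses of~$\leadsto$, the only clauses that introduce a label not already present---those for~$\s{tu}{\al}$ and for~$\s{\invt{t}}{\al}$---introduce a \emph{proper subterm} of an existing label, and the remaining clauses introduce none. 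Hence every label occurring in~$\sat{\De}$ is a subterm of a label occurring in~$\De$, and since~$\De$ is finite these form a finite, subterm-closed set~$S$. Likewise, the only clause altering the bottom base is~$\s{t}{\al}\leadsto\s{t}{\last(t)}$, which yields~$\last(t)$ with~$t\in S$; so every base occurring in~$\sat{\De}$ lies in the finite set consisting of the time variables occurring in~$\De$ together with~$\{\last(t)\mid t\in S\}$.

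It remains to bound the height, and this is the crux. The descending clauses~$\s{t}{\al}\leadsto\al$ and~$\suc(\al)\leadsto\al$ shorten samples, and~$\s{t}{\al}\leadsto\s{t}{\last(t)}$ collapses a sample to height one, but the three clauses~$\s{tu}{\al}\leadsto\s{t}{\s{u}{\al}}$, $\s{\invt{t}}{\al}\leadsto\s{t}{\s{\invt{t}}{\al}}$, and~$\s{\invt{t}}{\al}\leadsto\s{t}{\suc(\s{\invt{t}}{\al})}$ all make samples taller. The two involution clauses are the main obstacle: each \emph{retains} the top label~$\invt{t}$ while stacking a fresh copy of~$t$ above it, and applying~$\s{t}{\al}\leadsto\al$ immediately afterwards returns to the original sample. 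Thus~$\leadsto$ admits arbitrarily long ascending-then-descending walks and no measure can decrease at every step; finiteness therefore cannot come from a termination argument and must instead be obtained by an a priori bound on the reachable samples, in the spirit of the Fischer--Ladner closure.

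The observation that unblocks this is precisely the one already used above: in each taller-making clause the newly created top label is a proper subterm of the label it replaces, while the consumed or retained label descends into the frozen interior of the sample. I would therefore prove, by well-founded induction on the subterm order, that there is a function~$N\colon S\to\mathbb{N}$ such that for every~$t\in S$ and every sample~$\gamma$ at most~$N(t)$ distinct sequences of~$\s{\cdot}{-}$ and~$\suc$ layers can appear above~$\gamma$ in any sample reachable from~$\s{t}{\gamma}$ before the derivation pops below~$\gamma$, the bound being \emph{independent of}~$\gamma$. The atomic case is immediate; for~$t=t_1t_2$ the clause for~$\s{tu}{\al}$ exposes the proper subterms~$t_1,t_2$, and for~$t=\invt{s}$ the two involution clauses expose the proper subterm~$s$ above the enlarged interior~$\s{\invt{s}}{\gamma}$ (or its successor), so the induction hypothesis applies with the strictly smaller front label~$s$ and a bound not depending on that interior; crucially, popping all the~$s$-layers returns exactly to~$\s{\invt{s}}{\gamma}=\s{t}{\gamma}$, so no fresh behaviour is generated. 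Since the clause~$\s{t}{\al}\leadsto\s{t}{\last(t)}$ merely restarts the analysis above one of the finitely many bases~$\last(t)$ with~$t\in S$, combining the bound~$N$ with the finiteness of~$S$, of the set of bases, and of the heights occurring in~$\De$ yields a finite bound on the number of samples in~$\sat{\De}$.
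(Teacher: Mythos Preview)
Your argument is essentially correct and takes a genuinely different route from the paper's. The paper does not do your structural induction on the top label; instead it introduces an auxiliary relation~$\altsat$ in which the two cycle-creating involution clauses are replaced by
\[
\s{\invt{t}}{\al}\ \altsat\ \s{t}{\ka_{\s{\invt{t}}{\al}}}
\qquad\text{and}\qquad
\s{\invt{t}}{\al}\ \altsat\ \s{t}{\ka_{\suc(\s{\invt{t}}{\al})}},
\]
where the retained subsample is replaced by a \emph{fresh} time variable indexed by that subsample. This breaks all cycles outright, after which a single syntactic size measure~$\ell$ (counting occurrences of variables and of the symbols~$\cdot$, $\invt{\ }$, $\ut$, $\suc$ not in the scope of~$\last$) is weakly decreasing along~$\altsat$ and strictly decreasing except at the~$\last$ clause; since only finitely many samples of the form~$\s{t}{\last(t)}$ lie below the relevant bound, the $\altsat$-closure is finite. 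Finally a surjection from the $\altsat$-closure onto~$\sat{\De}$ is obtained by substituting each fresh~$\ka_\be$ back to~$\be$. Your observation that ``popping all the~$s$-layers returns exactly to~$\s{\invt{s}}{\gamma}$, so no fresh behaviour is generated'' is precisely what this fresh-variable-plus-surjection trick encodes, but you handle it directly by recognising the fixed point rather than by abstracting it away. Your approach is arguably more intuitive and yields an explicit recursive bound~$N(t)$; the paper's buys a one-shot global measure, avoiding the nested induction (on labels, then on the initial stack) that your last paragraph still needs to spell out. Two small points to tighten: your set of bases should be ``the bases occurring in~$\De$'' rather than ``the time variables occurring in~$\De$'' (an original sample may already have a~$\last(r)$ at the bottom), and the final combination step should say explicitly that for each initial sample~$\s{t_k}{\cdots\s{t_1}{\be}}\in\De$ one iterates the bound~$N$ down the stack, obtaining at most~$N(t_k)+\cdots+N(t_1)$ layer-sequences over~$\be$ before the first use of the~$\last$ rule.
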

\begin{proof}
It suffices to consider a sample set with one element, so suppose that $\De = \{\al_0\}$. Let $K = \{ \ka_\al \mid \al \in \Time\}\subseteq\TermV$ be the set of time variables not occurring in $\al_0$ indexed by the set  of all samples $\Time$, and consider a further binary relation $\altsat$ defined on $\Time$ by
 \begin{align*}
    \s{t}{\al}
    & \altsat
    \al
    &
    \s{tu}{\al}
    & \altsat 
    \s{t}{\s{u}{\al}}
    \\
    \suc(\al)
    & \altsat
    \al
    &
      \s{\invt{t}}{\al}
    & \altsat
    \s{t}{\ka_{\s{\invt{t}}{\al}}}
    \\
      \s{t}{\al}
    & \altsat
    \s{t}{\last(t)}
    &
      \s{\invt{t}}{\al}
    & \altsat
  \s{t}{\ka_{\suc(\s{\invt{t}}{\al})}}.
  \end{align*}
Now let $\De^\altsat \defeq \{\be\in\Time \mid \al \altsat^* \be \text{ for some }\al \in \De\}$, where $\altsat^*$ is the reflexive transitive closure of $\altsat$. For a sample $\al$, denote by $v(\al)$ the set of variables from ${\rm Var}$ that occur in $\al$, and by  $\ell(\al)$ the number of occurrences in $\al$ of variables from ${\rm Var}$ and symbols in $\{\cdot, \invt{ \ },\ut, \suc \}$ that do not occur in the scope of an occurrence of $\last$. Note that, by definition, $\altsat$ does not increase $v$ and $\ell$, i.e., if $\al \altsat \be$, then  $v(\be) \subseteq v(\al)$, $ \ell(\be)\leq \ell(\al)$, and if $\al \altsat \be$ is not of the  form  $\s{t}{\ga} \altsat \s{t}{\last(t)}$, then $\ell(\be) <\ell(\al)$. In particular,  $v(\be) \subseteq v(\al_0)$ and $\ell(\be) \leq \ell(\al_0)$, for each $\be \in \De^{\altsat}$. But clearly there are only finitely many samples of the form $\s{t}{\last(t)}$ such that $v(\s{t}{\last(t)}) \subseteq v(\al_0)$ and $\ell(\s{t}{\last(t)}) \leq \ell(\al_0)$.  Hence $\De^{\altsat}$ is finite.
  
Now define the map $\si\colon \De^{\altsat} \to \sat{\De}$ recursively by $\si(\ka) = \ka$ for $\ka \in \TermV\setminus K$, $\si(\last(t)) = \last(t)$,  $\si(\ka_\al) = \si(\al)$ for $\ka_\al \in K$,  $\si(\suc(\al)) = \suc(\si(\al))$, $\si(\s{t}{\al}) = \s{t}{\si(\al)}$. It is straightforward to check that $\si$ is well-defined and surjective. So $\sat{\De}$ is finite.
\end{proof} 

\begin{runexa}
Let us consider the equation $\ut \leq x\invt{x}$ as a running example throughout this section. In this case, the sample set of interest is $\{ \s{x \invt{x}}{\ka} \}$. In \Cref{fig:1}, its saturation  $\sat{\{ \s{x \invt{x}}{\ka} \}}$ is visualized as a tree, where each parent node is related to its successors by $\leadsto$ and redundant samples are omitted. 
\begin{figure}
\centering
\begin{forest} for tree={
    edge path={\noexpand\path[\forestoption{edge}] (\forestOve{\forestove{@parent}}{name}.parent anchor) -- +(0,-12pt)-| (\forestove{name}.child anchor)\forestoption{edge label};}
}
[$\s{x\invt{x}}{\ka}$
 [$\s{x}{\s{\invt{x}}{\ka}}$
  [$\s{\invt{x}}{\ka}$ 
   [ $\ka$ ] 
   [$\s{x}{\suc(\s{\invt{x}}{\ka})}$ [ $\suc(\s{\invt{x}}{\ka})$ ] ]
   [$\s{\invt{x}}{\last(\invt{x})}$ 
    [ $\last(\invt{x})$ ] 
    [$\s{x}{\s{\invt{x}}{\last(\invt{x})}}$]
    [$\s{x}{\suc(\s{\invt{x}}{\last(\invt{x})})}$ [ $\suc(\s{\invt{x}}{\last(\invt{x})})$ ] ]
   ]
  ]
  [$\s{x}{\last(x)}$ [$\last(x)$] ]
 ]
 [$\s{x\invt{x}}{\last(x\invt{x})}$
  [$\s{x}{\s{\invt{x}}{\last(x\invt{x})}}$
   [$\s{\invt{x}}{\last(x\invt{x})}$
   [$\last(x\invt{x})$]
    [$\s{x}{\suc(\s{\invt{x}}{\last(x\invt{x})})}$ [ $\suc(\s{\invt{x}}{\last(x\invt{x})})$ ] ]
   ]
  ]
 ]
]
\end{forest}
\caption{Visualization of the saturation of  $ \{\s{x\invt{x}}{\ka} \}$}
\label{fig:1}
\end{figure}
\end{runexa}

Consider next a first-order signature $\tau = \{\aleq, \aS, 0, \om \}$  with a binary relation symbol $\aleq$, a unary function symbol $\aS$, and two constants $0$ and $\om$. We denote by $\fosom$ the $\tau$-structure with universe $\som$ and natural order $\preceq^\fosom$, defining $\aS^\fosom(n) := n+1$, for each $n\in \om$, $\aS^\fosom(\om) := \om$, $\om^\fosom := \om$, and $0^\fosom := 0$. Since no other $\tau$-structure will be considered in this section, we will omit the superscripts from now on. We use the symbols $\neg$, $\andd$, $\orr$, $\To$, and $\Equiv$ to denote the logical connectives `not', `and', `or', `implies', and `if, and only if', respectively, and let $a \ale b$ stand for $a \aleq b \andd \neg(b \aleq a)$.

Let us fix now a saturated sample set $\De$ and consider its members as first-order variables. We define the following sets of first-order quantifier-free $\tau$-formulas over $\De$:
\begin{align}
\mathsf{struct}_\De \coloneqq \; & \{\al \aleq \be \To \s{t}{\al} \aleq \s{t}{\be} \mid \s{t}{\al}, \s{t}{\be} \in \De \} \; \cup 
\label{c:mon}
\\
&\{\al \eq 0 \To \s{t}{\al} \eq 0 \mid \s{t}{\al}  \in \De \}\;  \cup 
\label{c:zero}
\\
&\{\suc(\al) \eq \aS(\al) \mid \suc(\al) \in \De \} \; \cup 
\label{c:suc}
\\
&\{ \last(t) \aleq \al \Equiv \s{t}{\last(t)} \eq \s{t}{\al} \mid \s{t}{\al} \in \De \} \; \cup
\label{c:last}
\\
&\{\last(t) \eq \om \To \s{t}{\last(t)} \eq \om \mid \s{t}{\last(t)} \in \De \}
\label{c:last2}
\\[5pt]
\mathsf{log}_\De \coloneqq \; & \{ \s{\ut}{\al} \approx \al \mid \s{\ut}{\al} \in \De \} \;  \cup 
\label{c:id}
\\
&\{\last(\ut) \eq \om \mid \last(\ut) \in \De \} \;   \cup 
\label{c:id2}
\\
&\{ \s{tu}{\al} \eq \s{t}{\s{u}{\al}} \mid \s{tu}{\al} \in \De\} \;  \cup 
\label{c:prod}
\\
&\{\last(tu) \eq \om \To (\last(t) \eq \om \andd  \last(u) \eq \om) \mid \last(tu),\last(t),\last(u) \in \De  \}
\label{c:last-prod1}
\\[5pt]
\mathsf{inv}_\De \coloneqq \; & \{(0 \ale \al \andd \al \ale \om) \To \s{t}{\s{\invt{t}}{\al}} \ale \al \mid \s{\invt{t}}{\al} \in \De  \} \; \cup 
\label{c:inv-lower}
\\
&\{\s{\invt{t}}{\al} \ale \om \To \al \aleq  \s{t}{\suc(\s{\invt{t}}{\al})} \mid \s{\invt{t}}{\al} \in \De   \} \; \cup 
\label{c:inv-upper}
\\
&\{\last(\invt{t}) \eq \om \To \last(t) \eq \om \mid \last(\invt{t}), \last(t) \in \De \}
\label{c:inv-last-inf}
\\
\Si_\De \coloneqq \; &  \mathsf{struct}_\De \cup \mathsf{log}_\De \cup \mathsf{inv}_\De.
\end{align}
We call a $\fosom$-valuation $\de \colon \De \to \som$ a \emph{$\De$-diagram} if $\fosom, \de\models \Si_\De$.\footnote{The term `diagram' recalls a similar concept used to prove the decidability of the equational theory of lattice-ordered groups in~\cite{HM79}.} 

\begin{runexa}\label{diagram}
For $\De =  \sat{\{ \s{x \invt{x}}{\ka} \}}$, consider the map $\de \colon \De \to \som$ defined by
\begin{align*}
0 &= \de(\s{\invt{x}}{\ka}) = \de(\s{x}{\s{\invt{x}}{\ka}}) = \de(\s{x\invt{x}}{\ka}) , \\
1 &= \de(\s{x}{\suc(\s{\invt{x}}{\ka})}) = \de(\suc(\s{\invt{x}}{\ka})) = \de(\ka), \\
\om &= \de(\last(x)) = \de(\last(\invt{x})) = \de(\last(x\invt{x})) = \de(\s{x}{\last(x)}) =  \de(\s{\invt{x}}{\last(\invt{x})}) \\
&= \de(\s{x}{\s{\invt{x}}{\last(\invt{x})}}) = \de(\s{x}{\suc(\s{\invt{x}}{\last(\invt{x})})}) = \de(\suc(\s{\invt{x}}{\last(\invt{x})})) = \de(\s{x\invt{x}}{\last(x\invt{x})})  \\
&= \de(\s{x}{\s{\invt{x}}{\last(x\invt{x})}}) = \de(\s{\invt{x}}{\last(x\invt{x})}) = \de(\s{x}{\suc(\s{\invt{x}}{\last(x\invt{x})})}) = \de(\suc(\s{\invt{x}}{\last(x\invt{x})})).
\end{align*}
It is readily checked that $\de$ is a $\De$-diagram; e.g., $\de$ satisfies \Cref{c:zero}, since $\de(\s{\invt{x}}{\ka}) = 0$  and $\de(\s{x}{\s{\invt{x}}{\ka}}) = 0$.
\end{runexa}

Fixing a set of basic terms $T$ and time variable $\ka$, the following proposition relates any homomorphism $h\colon\m{Tm}\to\WarpA$ and $n\in\som$ to the existence of a $\De$-diagram $\de$ that maps $\ka$ to $n$ and $\s{t}{\ka}$ to $\sem{t}_h(n)$ for all $t\in T$, where $\De$ is the saturation of the sample set $\{\s{t}{\ka}\mid t\in T\}$.

\begin{prop}\label[prop]{proposition:valuation-to-diagram}
Let~$T$ be a set of basic terms, $\ka$ a time variable, and $\De$ the saturation of the sample set $\{\s{t}{\ka}\mid t\in T\}$. Then for any homomorphism $h\colon\m{Tm}\to\WarpA$ and $n\in\som$, there exists a~$\De$-diagram~$\de$ such that~$\de(\ka)=n$ and~$\de(\s{t}{\ka})=\sem{t}_h(n)$ for all $t\in T$.
\end{prop}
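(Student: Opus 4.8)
The plan is to take for $\de$ the evident interpretation of samples as elements of $\som$ and then check, clause by clause, that it satisfies $\Si_\De$. Concretely, I would define $\de$ on all of $\Time$ by structural recursion, setting $\de(\ka) \defeq n$, $\de(\s{t}{\al}) \defeq \sem{t}_h(\de(\al))$, $\de(\suc(\al)) \defeq \aS(\de(\al))$, and $\de(\last(t)) \defeq \lastw(\sem{t}_h)$, and then restrict to $\De$. Since each sample is a finite expression this recursion is well founded, and since $\leadsto$ never introduces a fresh time variable, $\ka$ is the only time variable occurring in $\De$. The two required identities are then immediate: $\de(\ka) = n$ by definition, and $\de(\s{t}{\ka}) = \sem{t}_h(\de(\ka)) = \sem{t}_h(n)$ for every $t \in T$. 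It therefore remains to verify $\fosom, \de \models \Si_\De$.

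The clauses of $\mathsf{struct}_\De$ and $\mathsf{log}_\De$ I expect to fall out of the basic properties of time warps together with the fact that $h$ is a homomorphism. Monotonicity (\Cref{c:mon}) holds because each $\sem{t}_h$ is order-preserving, \Cref{c:zero} because $\sem{t}_h(0)=0$, and \Cref{c:suc} by the definition of $\de(\suc(\al))$; likewise \Cref{c:id}, \Cref{c:id2}, and \Cref{c:prod} reduce to $\sem{\ut}_h = \id$, $\lastw(\id)=\om$, and $\sem{tu}_h = \sem{t}_h \circ \sem{u}_h$. The clauses \Cref{c:last} and \Cref{c:last2} I would settle by unwinding $\lastw(f) = \min\{m \mid f(m)=f(\om)\}$: since an order-preserving $f$ attains the value $f(\om)$ precisely from $\lastw(f)$ onwards, $\lastw(f) \le \de(\al)$ is equivalent to $f(\de(\al)) = f(\om) = f(\lastw(f))$, and $\lastw(f) = \om$ forces $f(\om) = \om$ because a join-preserving map whose finite values all lie strictly below $f(\om)$ must have $f(\om)=\om$. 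Finally, \Cref{c:last-prod1} is the forward direction of the first equivalence of \Cref{lemma:last-properties}.

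The genuine work, and the step I expect to be the main obstacle, is the three clauses of $\mathsf{inv}_\De$, since these encode the semantics of the involution and hence require the explicit pointwise description of $\inv{f}$ from \Cref{prop:inv}. Writing $f = \sem{t}_h$, $n = \de(\al)$, and $m = \inv{f}(n)$, I would argue \Cref{c:inv-lower} by observing that for $0 < n < \om$ both the first equivalence (when $m < \om$) and the third equivalence (when $m = \om$) of \Cref{prop:inv} yield $f(m) < n$, that is, $f(\inv{f}(n)) < n$. For \Cref{c:inv-upper}, when $m < \om$ the first equivalence gives $n \le f(m+1)$ for $n \in \om \setminus \{0\}$ and the second gives $f(m+1) = \om \ge n$ for $n = \om$, while $n = 0$ is trivial; in each case $n \le f(m+1) = f(\suc(m))$. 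Finally, \Cref{c:inv-last-inf} is the forward direction of the second equivalence of \Cref{lemma:last-properties}. Assembling these verifications yields $\fosom, \de \models \Si_\De$, so $\de$ is the desired $\De$-diagram.
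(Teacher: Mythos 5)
Your proposal is correct and follows essentially the same route as the paper's own proof: you define $\de$ by the same recursion ($\de(\ka)=n$, $\de(\s{t}{\al})=\sem{t}_h(\de(\al))$, $\de(\suc(\al))=\aS(\de(\al))$, $\de(\last(t))=\lastw(\sem{t}_h)$) and verify $\Si_\De$ clause by clause using order-preservation, the homomorphism property, the definition of $\lastw$, \Cref{lemma:last-properties} for the $\last$ clauses, and \Cref{prop:inv} for the involution clauses, exactly as the paper does. Your explicit case split on $\inv{f}(n)<\om$ versus $\inv{f}(n)=\om$ in \Cref{c:inv-lower,c:inv-upper} just spells out what the paper compresses into its citations of \Cref{prop:inv}.
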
 

\begin{proof}
 We define the map $\de\colon \De \to \som$ recursively by
 \begin{align*}
  \de(\ka) & \defeq n\\
  \de( \s{t}{\al}) & \defeq  \sem{t}_h(\de(\al)) && \text{for each }\s{t}{\al}\in\De\\
  \de(\last(t)) & \defeq \lastw(\sem{t}_h) && \text{for each } \last(t) \in \De\\ 
   \de(\suc(\al)) & \defeq  \aS(\de(\al)) &&  \text{for each } \suc(\al) \in \De.
 \end{align*}
The map $\de$ is well-defined since $\al \in \De$ if, and only if, there exist samples $\al_1,\ldots,\al_n$ such that $\al_1 = \s{t}{\ka}$ for some $t\in T$, $\al_n = \al$, and $\al_j \leadsto \al_{j+1}$ for each $j\in\{1,\ldots,n-1\}$. So $\de$ is a $\fosom$-valuation.  It remains to prove that $\de$ is a diagram, i.e., that $\de$ satisfies \Cref{c:mon}-\Cref{c:inv-last-inf}. Let us assume for convenience without further mention that all samples used are in $\De$, and agree to write $\sem{t}$ for $\sem{t}_h$.

 \begin{itemize}[labelindent=.3in]
 
 \item[\Cref{c:mon}] 
If $\de(\al) \leq \de(\be)$, then, by the definition of $\de$, using the fact that time warps are order-preserving, $\de(\s{t}{\al}) = \sem{t}(\de(\al)) \leq \sem{t}(\de(\be)) = \de(\s{t}{\be})$.
 
 \item[\Cref{c:zero}] 
 If $\de(\al) = 0$, then $\de(\s{t}{\al}) = \sem{t}(\de(\al)) = 0$.
 
 \item[\Cref{c:suc}] 
Follows directly from the definition of $\de$.
 
 \item[\Cref{c:last}] 
Using the definition of $\de$, 
  \[
 \de(\last(t)) = \lastw(\sem{t}) = \min\{ n\in \som \mid \sem{t}(n) = \sem{t}(\om) \}.
 \]
 Hence, clearly, for each $k\in \som$,
 \[
 \lastw(\sem{t}) \leq k \:\iff\: \sem{t}(\lastw(\sem{t})) = \sem{t}(k),
 \]
 and therefore, for all $\s{t}{\al} \in \De$,
 \[
 \de(\last(t)) \leq \de(\al) \:\iff\: \de(\s{t}{\last(t)}) = \de(\s{t}{\al}).
 \]

 \item[\Cref{c:last2}] 
If $\lastw(\sem{t}) = \de(\last(t)) = \om$, then $\sem{t}(n) < \sem{t}(\om)$ for all $n\in\om$, and it follows that $\de(\s{t}{\last(t)}) = \sem{t}(\om) = \bigvee\{\sem{t}(n)\mid n\in\om\}=\om$.

 \item[\Cref{c:id}] 
$\de(\s{\ut}{\al}) =\id(\de(\al))  = \de(\al)$.
 
 \item[\Cref{c:id2}] $\de(\last(\ut)) =  \lastw(\id) = \om$.
 
 \item[\Cref{c:prod}] 
$\de(\s{tu}{\al}) = \sem{tu}(\de(\al)) = \sem{t}(\sem{u}(\de(\al))) = \de(\s{t}{\s{u}{\al}})$.

 \item[\Cref{c:last-prod1}] 
If $\de(\last(tu)) = \om$, then $\lastw(\sem{t}\sem{u}) = \om$ and, using \Cref{lemma:last-properties}, it follows that $\de(\last(t)) = \lastw(\sem{t}) = \om$ and $\de(\last(u)) =\lastw(\sem{u}) = \om$.

\item[\Cref{c:inv-lower}]If $0<\de(\al) < \om$, then \Cref{prop:inv} yields $\de(\s{t}{\s{\invt{t}}{\al}}) = \sem{t}(\inv{\sem{t}}(\de(\al))) < \de(\al)$.

\item[\Cref{c:inv-upper}] If $\inv{\sem{t}}(\de(\al)) = \de(\s{\invt{t}}{\al}) < \om$, then either $\de(\al) = 0 \leq \de(\s{t}{\suc(\s{\invt{t}}{\al})})$ or $\de(\al)>0$ and \Cref{prop:inv} yields $\de(\al) \leq \sem{t}(\inv{\sem{t}}(\de(\al)) + 1) =  \de(\s{t}{\suc(\s{\invt{t}}{\al})})$.

\item[\Cref{c:inv-last-inf}]
It follows directly from \Cref{lemma:last-properties} that $\lastw(\inv{\sem{t}}) = \de(\last(\invt{t})) = \om$ if, and only if, $\de(\last(t)) = \lastw(\sem{t}) = \om$.  \qedhere
\end{itemize}  
\end{proof}

\begin{runexa}\label{construction}
Consider again $\De = \sat{\{ \s{x \invt{x}}{\ka}\}}$. Applied to a homomorphism $h\colon\m{Tm}\to\WarpA$ with $h(x) = \id$ and $n = 1$, the construction in the above proof yields exactly the diagram $\de\colon \De \to \som$ defined in part~\ref{diagram} of the running example.
\end{runexa}

Next, we provide an algorithm that converts any $\De$-diagram $\de$ for a finite saturated sample set $\De$ into an algorithmic description of a homomorphism  $h\colon\m{Tm}\to\WarpA$ satisfying $\sem{t}_h(\de(\al)) = \de(\s{t}{\al})$ for all $t[\al] \in \De$. Indeed, we show that this homomorphism can be described by mapping each $x\in{\rm Var}$ occurring in $\De$ to a {\em regular} time warp, i.e., a time warp that is either eventually constant or eventually linear. The main challenge here is to show that each partial map consisting of ordered pairs $(\de(\al), \de(\s{x}{\al}))$ with $\s{x}{\al} \in \De$ extends to a time warp and that a composition of these time warps given by a basic term $t$ extends the set of ordered pairs $(\de(\al), \de(\s{t}{\al}))$ with $\s{t}{\al} \in \De$.

 For any finite saturated sample set $\De$, $\De$-diagram $\de$, and basic term~$t$, let
  \begin{mathpar}
    \diag{t}{\de}
    \defeq
    \{
    (\de(\al), \de(\s{t}{\al}))
    \mid
    \s{t}{\al} \in \De
    \}.
  \end{mathpar}
We will say that a time warp~$f$  
\begin{itemize}
\item	 \emph{extends} $\diag{t}{\de}$  if $f(i) = j$ for all~$(i, j) \in \diag{t}{\de}$;
\item	 \emph{strongly extends} $\diag{t}{\de}$  if it extends $\diag{t}{\de}$ and also  	
\begin{align*}
\diag{t}{\de}\neq\emptyset \,\text{ and }\, \de(\last(t)) = \om \implies \lastw(f) = \om.
\end{align*}
\end{itemize}
For convenience, let us fix for the following five lemmas --- which collectively describe how to obtain a time warp that strongly extends $\diag{t}{\de}$ based on the structure of $t$ --- a finite saturated sample set $\De$ and $\De$-diagram $\de$.

\begin{lem}\label[lem]{lemma:nonempty}
 There exists an algorithm that produces an algorithmic description, for any variable $x\in{\rm Var}$, of a time warp~$f$ that strongly extends~$\diag{x}{\de}$.
\end{lem}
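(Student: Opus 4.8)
The plan is to read off from $\diag{x}{\de}$ a finite monotone partial map and then extend it to a regular time warp, using the diagram constraints to guarantee that such an extension is available. If $\diag{x}{\de}$ is empty, I would simply take $f \defeq \id$, for which the strong-extension clause is vacuous. So assume $\diag{x}{\de} \neq \emptyset$ and set $D \defeq \{\de(\al) \mid \s{x}{\al} \in \De\}$. The first step is to check that $\diag{x}{\de}$ is the graph of an order-preserving partial function $f_0 \colon D \to \som$: single-valuedness and monotonicity both follow from \Cref{c:mon} (applied in both directions when two inputs are equal), and \Cref{c:zero} forces $f_0(0) = 0$ when $0 \in D$. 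Since $\diag{x}{\de} \neq \emptyset$, saturation gives $\s{x}{\last(x)} \in \De$ and hence $\last(x) \in \De$, so that $L \defeq \de(\last(x))$ lies in $D$; by \Cref{c:last}, $f_0$ attains its maximum value exactly on $\{i \in D \mid i \geq L\}$.

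The second step is to extend $f_0$ to a regular time warp $f$. I would put $f(0) \defeq 0$ and interpolate monotonically (by a step function) between consecutive finite points of $D$, which secures order-preservation and the prescribed values on finite inputs. The substantive choice concerns the behaviour above the greatest finite point $d$ of $D$---or above $0$ if $D$ has no finite point---together with the value at $\om$, which must satisfy $f(\om) = \bigjn\{f(n) \mid n \in \om\}$. If $\om \notin D$, or $\om \in D$ with $f_0(\om) < \om$, I would make $f$ eventually constant (at $f_0(d)$, respectively $f_0(\om)$); if $\om \in D$ with $f_0(\om) = \om$, eventual constancy cannot reach $\om$, so I would instead make $f$ eventually linear above $d$, e.g.\ $f(d + j) \defeq f_0(d) + j$ for $j \geq 1$, whence $\bigjn\{f(n) \mid n \in \om\} = \om = f(\om)$. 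In each case $f$ is regular, and since $D$, the values of $f_0$, and $L$ are computable from $\de$, the resulting description of $f$ is algorithmic.

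The third step is the strong-extension clause, which is non-vacuous only when $\de(\last(x)) = \om$, i.e.\ $L = \om$. In that case \Cref{c:last2} forces $f_0(\om) = \om$ while \Cref{c:last} forces $f_0(i) < \om$ for every finite $i \in D$, putting us in the eventually-linear branch; the linear tail then keeps $f(m) < \om$ for all finite $m$ whereas $f(\om) = \om$, so $\lastw(f) = \om$, as required.

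I expect the main obstacle to be precisely this consistency at $\om$: one must be sure that the value $f_0(\om)$ imposed by the diagram (when $\om \in D$) can be simultaneously realised as $\bigjn\{f(n) \mid n \in \om\}$ and be compatible with $L$, so that the eventually-constant-versus-eventually-linear dichotomy always offers a legitimate tail. The combined force of \Cref{c:mon}, \Cref{c:last}, and \Cref{c:last2} is exactly what excludes the incompatible configurations---for instance a finite value at $\om$ together with diagram values approaching $\om$ cofinally from below---and checking that no inconsistent case survives is the crux of the argument; the remaining monotone-interpolation bookkeeping is routine.
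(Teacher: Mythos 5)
Your proposal is correct and follows essentially the same route as the paper's proof: extract a finite monotone partial map from $\diag{x}{\de}$ via \cref{c:mon}, pin down its behaviour at $0$ and at $\de(\last(x))$ using \cref{c:zero}, \cref{c:last}, and \cref{c:last2}, interpolate to a regular time warp, and obtain the strong-extension clause from the fact that when $\de(\last(x)) = \om$ all prescribed values at finite inputs are finite, so the tail forces $\lastw(f) = \om$. The only difference is cosmetic: the paper adjoins the anchor pairs $(0,0)$ and $(\om, \de(\s{x}{\last(x)}))$ and uses the single interpolation formula $\min(j_2, j_1 \oplus (i - i_1))$ on every gap, where your construction uses step functions on finite gaps together with an explicit eventually-constant versus eventually-linear case split for the tail.
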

\begin{proof}
If~$\diag{x}{\de}=\emptyset$, then any time warp strongly extends it, so we may assume that $\diag{x}{\de}\neq\emptyset$.  By~\cref{c:mon}, if $\s{x}{\al},\s{x}{\be} \in \De$ with $\de(\al) \leq \de(\be)$, then also $\de(\s{x}{\al}) \leq \de(\s{x}{\be})$. Hence $\diag{x}{\de}$  is a partial order-preserving map from~$\som$ to $\som$. Moreover, since~$\De$ is saturated, it follows that~$(\de(\last(x)), \de(\s{x}{\last(x)})) \in \diag{x}{\de}$. Now for $\s{x}{\al} \in \De$, if $\de(\al) = 0$, then, by~\cref{c:zero},  $\de(\s{x}{\alpha}) = 0$, and if $\de(\al) \geq \de(\last(x))$, then, by~\cref{c:last}, $\de( \s{x}{\al}) ) = \de(\s{x}{\last(x)})$. Hence
the relation $X \coloneqq \diag{x}{\de} \cup \{ (0, 0), (\om,\de(\s{x}{\last(x)})) \}$ is also a partial order-preserving map. For each $i \in\om$, there exists a unique pair~$(i_1, j_1), (i_2, j_2) \in X$ such that~$i_1 \le i < i_2$ and there is no~$(i_3, j_3) \in X$ with~$i_1 < i_3 < i_2$, and we define
 \[
 f(i) \defeq \min(j_2, j_1 \oplus (i - i_1)),
 \]
 where $n\oplus m := \min\{\om, n+m \}$. Let also $f(\om) \defeq \de(\s{x}{\last(x)}))$.

Clearly, since $X$ is order-preserving, also $f$ is order-preserving. Moreover, it extends~$\diag{x}{\de}$, since~$i = i_1 < \om$ implies~$f(i_1) = \min(j_2, j_1) = j_1$. In particular, $f(0) = 0$. To confirm that $f$ is a time warp, it remains to show  that~$f(\om) = \bigvee\{f(i)\mid i\in\om\}$. If~$\de(\s{x}{\last(x)}) =f(\om) < \om$, then, by \cref{c:last2}, $\de(\last(x)) < \om$ and, since $f$ is order-preserving, $f(i) = f(\om)$ for each~$i \ge \de(\last(x))$ and~$f(\om) = f(\de(\last(x))) = \bigvee\{f(i)\mid i\in\om\}$.  If~$f(\om) = \om$, then for each~$j\in\om$, there exists an~$i\in\om$ such that~$f(i) > j$, and hence~$\bigvee\{f(i)\mid i\in \om\} = \om = f(\om)$. 

Finally, suppose that~$\de(\last(x)) = \om$. Then \cref{c:last2} yields $(\om,\om) \in \diag{x}{\de}$ and, by \cref{c:last}, for any $(i,j) \in \diag{x}{\de}$, if $i\in\om$, then also $j\in\om$. Hence, $\lastw(f) = \om$, by the definition of $f$. So $f$ strongly extends~$\diag{x}{\de}$.
\end{proof}

\begin{runexa}
Consider again $\De =  \sat{\{\s{x \invt{x}}{\ka}\}}$ and the $\De$-diagram $\de\colon \De \to \som$ from parts~\ref{diagram} and~\ref{construction} of the running example. The construction in the above proof yields the strong extension of $\diag{x}{\de}$ depicted in \Cref{fig:2}, where the added relations are indicated with dashed arrows. Indeed, the obtained extension is the time warp $\id$, which we already know to be a suitable extension.
\begin{figure}
\centering
\begin{tikzpicture}[place/.style={circle,draw=black,fill=black, minimum size = 3pt, inner sep = 0pt, outer sep = 2pt}]
  %------------left-nodes----------
  \node[place]        (nw) at (0,2.4) {};
  \node[place] (n3) at (0,1.4) {};
  \node[place] (n2) at (0,0.7)  {};
  \node[place] (n1) at (0,0) {};
  \node[place] (n0) at (0,-0.7)    {};
  %------------right-nodes----------
  \node[place] (mw) at (2,2.4)  {};
  \node[place] (m3) at (2,1.4) {};
  \node[place] (m2) at (2,0.7)  {};
  \node[place] (m1) at (2,0) {};
  \node[place] (m0) at (2,-0.7)    {};
  %------------labels---------
  \node[left] at (nw) {$\om$};
  \node[left] at (n3) {$3$};
  \node[left] at (n2) {$2$};
  \node[left] at (n1) {$1$};
  \node[left] at (n0) {$0$};
  \node[right] at (mw) {$\om$};
  \node[right] at (m3) {$3$};
  \node[right] at (m2) {$2$};
  \node[right] at (m1) {$1$};
  \node[right] at (m0) {$0$};
  \draw[loosely dotted, line width=0.62pt] (n3) -- (nw);
  \draw[loosely dotted, line width=0.62pt] (m3) -- (mw);
  \draw[->,dashed, thick] (n3) -- (m3);
  \draw[->,dashed, thick] (n2) -- (m2);
  \draw[->, thick] (n1) -- (m1);
  \draw[->, thick] (n0) -- (m0);
  \draw[->, thick] (nw) -- (mw);
\end{tikzpicture}
\caption{Visualization of the extension of $\diag{x}{\de}$}
\label{fig:2}
\end{figure}
\end{runexa}

\begin{lem} \label[lem]{lemma:fundamental-product}
Let $t_1$ and $t_2$ be basic terms. If~$f_1$ and $f_2$ strongly extend~$\diag{t_1}{\de}$ and~$\diag{t_2}{\de}$, respectively, then $f_1 f_2$ strongly extends~$\diag{t_1 t_2}{\de}$.
\end{lem}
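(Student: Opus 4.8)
The plan is to prove the two required conditions separately: first that $f_1 f_2$ \emph{extends} $\diag{t_1 t_2}{\de}$, and second that it \emph{strongly} extends it. For the first part, I would take an arbitrary pair $(\de(\al), \de(\s{t_1 t_2}{\al})) \in \diag{t_1 t_2}{\de}$, so that $\s{t_1 t_2}{\al} \in \De$. Since $\De$ is saturated and $\s{t_1 t_2}{\al} \leadsto \s{t_1}{\s{t_2}{\al}}$, also $\s{t_1}{\s{t_2}{\al}} \in \De$, which forces $\s{t_2}{\al} \in \De$ as well (via $\s{t_1}{\be} \leadsto \be$ with $\be = \s{t_2}{\al}$). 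This gives me the two smaller pairs $(\de(\al), \de(\s{t_2}{\al})) \in \diag{t_2}{\de}$ and $(\de(\s{t_2}{\al}), \de(\s{t_1}{\s{t_2}{\al}})) \in \diag{t_1}{\de}$. Applying the hypotheses that $f_2$ extends $\diag{t_2}{\de}$ and $f_1$ extends $\diag{t_1}{\de}$, I compute $(f_1 f_2)(\de(\al)) = f_1(f_2(\de(\al))) = f_1(\de(\s{t_2}{\al})) = \de(\s{t_1}{\s{t_2}{\al}})$. The final equality $\de(\s{t_1}{\s{t_2}{\al}}) = \de(\s{t_1 t_2}{\al})$ comes from \Cref{c:prod}, since $\de$ is a diagram. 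Hence $f_1 f_2$ extends $\diag{t_1 t_2}{\de}$.

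For the strong extension condition, I would assume $\diag{t_1 t_2}{\de} \neq \emptyset$ and $\de(\last(t_1 t_2)) = \om$, and show $\lastw(f_1 f_2) = \om$. The natural tool is \Cref{lemma:last-properties}, which tells us $\lastw(f_1 f_2) = \om$ precisely when both $\lastw(f_1) = \om$ and $\lastw(f_2) = \om$. So the strategy is to derive these two facts from the diagram constraints and the strong-extension hypotheses on $f_1$ and $f_2$. Since $\diag{t_1 t_2}{\de} \neq \emptyset$, the samples $\last(t_1 t_2)$, $\last(t_1)$, and $\last(t_2)$ all lie in $\De$ by saturation (the relation $\s{t}{\al} \leadsto \s{t}{\last(t)}$ and the structure of the product rules ensure the relevant $\last$ samples are generated), so \Cref{c:last-prod1} applies and yields $\de(\last(t_1)) = \om$ and $\de(\last(t_2)) = \om$ from $\de(\last(t_1 t_2)) = \om$.

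The remaining step is to invoke the strong-extension hypotheses: since $f_1$ strongly extends $\diag{t_1}{\de}$ with $\de(\last(t_1)) = \om$, and since $f_2$ strongly extends $\diag{t_2}{\de}$ with $\de(\last(t_2)) = \om$, I obtain $\lastw(f_1) = \om$ and $\lastw(f_2) = \om$ — provided the respective diagrams are nonempty, which follows from $\diag{t_1 t_2}{\de} \neq \emptyset$ together with the saturation argument above (the sample $\s{t_1 t_2}{\al}$ generates $\s{t_1}{\s{t_2}{\al}}$ and $\s{t_2}{\al}$, witnessing nonemptiness of both $\diag{t_1}{\de}$ and $\diag{t_2}{\de}$). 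Then \Cref{lemma:last-properties} gives $\lastw(f_1 f_2) = \om$, completing the proof.

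The main obstacle I anticipate is the careful bookkeeping of \emph{which} samples are guaranteed to be in $\De$ by saturation, since the strong-extension hypotheses and \Cref{c:last-prod1} have membership side conditions (they require $\last(t_1)$, $\last(t_2)$, $\last(t_1 t_2)$, and the witnessing product samples all to be present). Verifying that the closure rules for $\leadsto$ actually produce every needed $\last$ sample and every nonemptiness witness is the delicate point; the arithmetic of composition itself is routine once membership is secured.
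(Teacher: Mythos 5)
Your proof is correct and follows essentially the same route as the paper's own proof: the identical three-step computation $(f_1f_2)(\de(\al)) = f_1(\de(\s{t_2}{\al})) = \de(\s{t_1}{\s{t_2}{\al}}) = \de(\s{t_1 t_2}{\al})$ via \cref{c:prod} for the extension property, and the identical combination of \cref{c:last-prod1} with \Cref{lemma:last-properties} for the strong-extension property. The only difference is that you make explicit the saturation bookkeeping (that $\s{t_1}{\s{t_2}{\al}}$, $\s{t_2}{\al}$, and the samples $\last(t_1)$, $\last(t_2)$, $\last(t_1 t_2)$ all belong to $\De$, which also witnesses nonemptiness of $\diag{t_1}{\de}$ and $\diag{t_2}{\de}$), side conditions the paper's proof uses implicitly.
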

\begin{proof}
Suppose that $f_1$ and $f_2$ strongly extend~$\diag{t_1}{\de}$ and~$\diag{t_2}{\de}$, respectively. Then for all~$\s{t_1 t_2}{\al} \in \De$, 
  \begin{align*}
    f_1 f_2 (\de(\al))
    & =
    f_1 (f_2(\de(\al)))
    & \mbox{(by definition)}
    \\
    & =
    f_1(\de(\s{t_2}{\al}))
    & \mbox{(since $f_2$ extends~$\diag{t_2}{\de}$)}
    \\
    & =
    \de(\s{t_1}{\s{t_2}{\al}})
    & \mbox{(since $f_1$ extends~$\diag{t_1}{\de}$)}
    \\
    & =
    \de(\s{t_1 t_2}{\al})
    & \mbox{(by \cref{c:prod})}.
  \end{align*}
 So~$f_1 f_2$ extends~$\diag{t_1 t_2}{\de}$, and it remains to show that the extension is strong. We can assume that $\diag{t_1 t_2}{\de}$ is non-empty, since otherwise there is nothing to prove. Suppose that~$\de(\last(t_1 t_2)) = \om$. Then $\de(\last(t_1)) = \de(\last(t_2)) = \om$, by~\cref{c:last-prod1}, and, since $f_1$ and $f_2$ strongly extend $\diag{t_1}{\de}$ and $\diag{t_2}{\de}$, respectively, also $\lastw(f_1) = \lastw(f_2) = \om$. Hence $\lastw(f_1 f_2) = \om$, by~\Cref{lemma:last-properties}.
\end{proof}

\begin{lem} \label[lem]{lemma:fundamental-star} 
Let $t$ be a basic term. If~$f$ strongly extends~$\diag{t}{\de}$, then~$\inv{f}$ strongly extends~$\diag{\invt{t}}{\de}$.
\end{lem}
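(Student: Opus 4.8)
The plan is to establish the two requirements of strong extension separately, using the explicit four-case description of the involution in \Cref{prop:inv} as the main tool. I would fix a sample $\s{\invt{t}}{\al} \in \De$ and abbreviate $a \defeq \de(\al)$ and $b \defeq \de(\s{\invt{t}}{\al})$, so that the first (and harder) task is to show $\inv{f}(a) = b$. Since $\De$ is saturated, the samples $\s{t}{\s{\invt{t}}{\al}}$, $\suc(\s{\invt{t}}{\al})$, $\s{t}{\suc(\s{\invt{t}}{\al})}$, and $\s{\invt{t}}{\last(\invt{t})}$ all lie in $\De$; in particular $\diag{t}{\de} \neq \emptyset$. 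Because $f$ extends $\diag{t}{\de}$, the formulas about samples of the form $\s{t}{\ga}$ translate directly into statements about $f$: namely $f(b) = \de(\s{t}{\s{\invt{t}}{\al}})$, and, when $b < \om$, $f(b+1) = \de(\s{t}{\suc(\s{\invt{t}}{\al})})$ using \Cref{c:suc}. I would then split on the value of $a$.

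The straightforward cases are $a = 0$ and $0 < a < \om$. When $a = 0$, \Cref{c:zero} forces $b = 0$, and $\inv{f}(0) = 0$ since $\inv{f}$ is a time warp. When $0 < a < \om$, \Cref{c:inv-lower} gives $f(b) < a$; if moreover $b < \om$, then \Cref{c:inv-upper} gives $a \leq f(b+1)$, so $f(b) < a \leq f(b+1)$ and the first equivalence of \Cref{prop:inv} yields $\inv{f}(a) = b$, whereas if $b = \om$ then $f(\om) < a$ and the third equivalence yields $\inv{f}(a) = \om$.

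I expect the main obstacle to be the case $a = \om$, because there \Cref{c:inv-lower} is vacuous and \Cref{c:inv-upper} delivers only the upper bound $f(b+1) = \om$ (when $b < \om$), leaving the crucial inequality $f(b) < \om$ demanded by the second equivalence of \Cref{prop:inv} undetermined by the local constraints. The key move is to note, via \Cref{c:last} applied to $\invt{t}$ together with $a = \om$, that $\de(\s{\invt{t}}{\last(\invt{t})}) = b$ as well, so that the sample $\s{\invt{t}}{\last(\invt{t})}$ carries the very same value $b$, and then to case-split on $\de(\last(\invt{t}))$. If $\de(\last(\invt{t})) = \om$, then \Cref{c:inv-last-inf} gives $\de(\last(t)) = \om$, whence $\lastw(f) = \om$ by strong extension of $f$, forcing $f(b) < f(\om)$ and hence $f(b) < \om$. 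If $\de(\last(\invt{t})) < \om$, then $b > 0$ excludes $\de(\last(\invt{t})) = 0$ by \Cref{c:zero}, so $0 < \de(\last(\invt{t})) < \om$ and \Cref{c:inv-lower} applied to $\s{\invt{t}}{\last(\invt{t})}$ gives $f(b) = \de(\s{t}{\s{\invt{t}}{\last(\invt{t})}}) < \de(\last(\invt{t})) < \om$. Either way $f(b) < \om = f(b+1)$, so $\inv{f}(\om) = b$. The remaining subcase $a = b = \om$ is settled by the fourth equivalence of \Cref{prop:inv}: assuming for contradiction that $f(\om) = \om$ and $\lastw(f) < \om$, strong extension gives $\de(\last(t)) < \om$, hence $\de(\last(\invt{t})) < \om$ by \Cref{c:inv-last-inf}, and then $0 < \de(\last(\invt{t})) < \om$ as before, so \Cref{c:inv-lower} forces $f(\om) < \om$, a contradiction.

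Finally, the strong-extension clause would be immediate: if $\diag{\invt{t}}{\de} \neq \emptyset$ and $\de(\last(\invt{t})) = \om$, then \Cref{c:inv-last-inf} gives $\de(\last(t)) = \om$ and saturation gives $\diag{t}{\de} \neq \emptyset$, so $\lastw(f) = \om$ by strong extension of $f$, and hence $\lastw(\inv{f}) = \om$ by \Cref{lemma:last-properties}.
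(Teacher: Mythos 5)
Your proof is correct in substance and follows essentially the same route as the paper: the same initial disposal of $\de(\al)=0$ via \Cref{c:zero}, the same case analysis on $a=\de(\al)$ and $b=\de(\s{\invt{t}}{\al})$ driven by the four equivalences of \Cref{prop:inv} and the diagram conditions, and the identical argument for the strong-extension clause via \Cref{c:inv-last-inf} and \Cref{lemma:last-properties}. (Your treatment of $a=b=\om$ by contradiction through the fourth equivalence is a cosmetic variant of the paper's appeal to the previously proved cases plus monotonicity of $\inv{f}$.)

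There is one small hole: in the subcase $a=\om$, $b<\om$, $\de(\last(\invt{t}))<\om$ you assert ``$b>0$'' without justification, and nothing in the diagram conditions rules out $b=0$. This situation genuinely occurs: for the term $t=x$ and a valuation with $h(x)=\top$ one has $\inv{\top}=\bot$, so the induced diagram (via \Cref{proposition:valuation-to-diagram}) at $\de(\ka)=\om$ satisfies $a=\om$, $b=0$, and indeed $\de(\last(\invt{x}))=0$, so your exclusion of $\de(\last(\invt{t}))=0$ via \Cref{c:zero} cannot even be launched. The repair is immediate, and uses tools you already have on the table: if $b=0$ then $f(b)=f(0)=0<\om$ simply because $f$ is a time warp, and combined with $f(b+1)=\om$ from \Cref{c:inv-upper}, the second equivalence of \Cref{prop:inv} still yields $\inv{f}(\om)=0=b$. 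To be fair, the paper's own proof elides exactly the same point: its phrase ``as in the first case'' applies \Cref{c:inv-lower} at the sample $\s{\invt{t}}{\last(\invt{t})}$, which likewise requires $0<\de(\last(\invt{t}))$, so your write-up and the paper's share this (trivially fixable) gloss.
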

\begin{proof}
Let $\s{\invt{t}}{\al} \in \De$. Note that, by \cref{c:zero}, if $\de(\al) = 0$, then $\de(\s{\invt{t}}{\al}) = 0=\inv{f}(0)$. Hence we can assume  that $\de(\al)> 0$.  Suppose first that $\de(\s{\invt{t}}{\al}) < \om $.  Then, by \cref{c:inv-upper},  $ \de(\al) \leq \de(\s{t}{\suc(\s{\invt{t}}{\al})})$.  Since $f$ extends~$\diag{t}{\de}$, it follows that $ \de(\al) \leq \de(\s{t}{\suc(\s{\invt{t}}{\al})}) = f(\de(\s{\suc(\invt{t}}{\al})))$, and, by~\cref{c:suc}, also $f(\de(\s{\suc(\invt{t}}{\al})))= f(\de(\s{\invt{t}}{\al})+1) $, i.e., $\de(\al) \leq f(\de(\s{\invt{t}}{\al})+1)$. We have two cases:
\begin{enumerate}
\item $\de(\al) < \om$. Then, by \Cref{c:inv-lower} and since $f$ extends $\diag{t}{\de}$, it follows that 
$f( \de({\s{\invt{t}}{\al}})) = \de(\s{t}{\s{\invt{t}}{\al}}) < \de(\al)$. So $\de(\s{\invt{t}}{\al}) =\inv{f}(\de(\al))$, by \Cref{prop:inv}.
\item $\de(\al) = \om$. Then, since $\de(\s{\invt{t}}{\al}) < \om$, by \cref{c:last} and \cref{c:last2},  it follows that  $\last(\invt{t}) < \om$ and $\de(\s{\invt{t}}{\last(\invt{t})}) = \de(\s{\invt{t}}{\al}) $. Hence, as in the first case,
\(
f(\de({\s{\invt{t}}{\last(\invt{t})}})) < \de(\last(\invt{t}))
\).
So $f( \de({\s{\invt{t}}{\al}})) < \om = \de(\al) \leq  f(\de(\s{\invt{t}}{\al})+1)$, and $\de(\s{\invt{t}}{\al}) =\inv{f}(\de(\al))$, by \Cref{prop:inv}.
\end{enumerate}
Otherwise $\de(\s{\invt{t}}{\al}) = \om$ and we again have two cases:  
\begin{enumerate}
 \item $\de(\al)<\om$. Then, since $f( \de({\s{\invt{t}}{\al}})) = \de(\s{t}{\s{\invt{t}}{\al}}) < \de(\al)$ by  \Cref{c:inv-lower}, it follows that $f(\om) =  f(\de(\s{\invt{t}}{\al})) < \de(\al)$. Hence $\inv{f}(\de(\al)) = \om$, by  \Cref{prop:inv}.
 \item $\de(\al) = \om$. Then there are two cases. If $\de(\last(\invt{t})) <\om$, then, using the previous cases, $\inv{f}(\de(\last(\invt{t}))) = \om$, and hence  $\inv{f}(\om) = \om$. Otherwise $\de(\last(\invt{t})) = \om$. Then $\de(\last(t)) = \om$, by \cref{c:inv-last-inf}, and, since $f$ strongly extends~$\diag{t}{\de}$, also $\lastw(f) = \om$. But then $\inv{f}(\om) = \om$, by \Cref{prop:inv}.
  \end{enumerate}
  It remains to show that the extension of $\diag{\invt{t}}{\de}$ to $\inv{f}$ is strong. We can assume that $\diag{\invt{t}}{\de}$ is non-empty. Suppose that $\de(\last(\invt{t}))  =\om$. Then $\de(\last(t)) = \om$, by \cref{c:inv-last-inf}, and, since $f$ strongly extends~$\diag{t}{\de}$, also $\lastw(f) = \om$.  Hence $\lastw(\inv{f}) = \om$, by \Cref{lemma:last-properties}.
\end{proof}

\begin{lem} \label[lem]{lemma:fundamental-id}
The time warp~$\id$ strongly extends~$\diag{\ut}{\de}$.
\end{lem}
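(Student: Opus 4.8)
The plan is to show that $\id$ strongly extends $\diag{\ut}{\de}$ by unwinding the two defining conditions of a strong extension. First I would verify that $\id$ extends $\diag{\ut}{\de}$: any element of $\diag{\ut}{\de}$ has the form $(\de(\al), \de(\s{\ut}{\al}))$ for some $\s{\ut}{\al} \in \De$, and by \Cref{c:id} the diagram $\de$ satisfies $\de(\s{\ut}{\al}) = \de(\al)$, so the pair is in fact $(\de(\al), \de(\al))$, which $\id$ trivially satisfies since $\id(\de(\al)) = \de(\al)$.

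Next I would check the strong extension clause. Assume $\diag{\ut}{\de} \neq \emptyset$ and $\de(\last(\ut)) = \om$; I must show $\lastw(\id) = \om$. This is immediate from the computation $\lastw(\id) = \min\{m \in \som \mid \id(m) = \id(\om)\} = \min\{m \in \som \mid m = \om\} = \om$. Notably, the hypothesis $\de(\last(\ut)) = \om$ is not even needed here, since $\lastw(\id) = \om$ holds unconditionally; indeed \Cref{c:id2} already forces $\de(\last(\ut)) = \om$ whenever $\last(\ut) \in \De$, so the two facts are consistent.

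There is no real obstacle in this lemma; it is the base case of the inductive family of five lemmas and is essentially a direct appeal to the definition of a diagram (specifically \Cref{c:id}) together with the elementary observation that $\id$ is eventually linear with $\lastw(\id) = \om$. The only point requiring a sentence of care is confirming that every pair in $\diag{\ut}{\de}$ genuinely reduces to a diagonal pair via \Cref{c:id}, which is exactly the content to be recorded.
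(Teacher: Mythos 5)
Your proof is correct and takes essentially the same route as the paper's: the extension property is exactly \Cref{c:id}, and strongness follows from the unconditional fact that $\lastw(\id) = \om$. You merely spell out the details (including the harmless observation that the hypothesis $\de(\last(\ut)) = \om$ is automatic by \Cref{c:id2}) that the paper's two-line proof leaves implicit.
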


\begin{proof}
  The extension property follows from~\cref{c:id}; the fact that it is strong follows from the fact that~$\lastw(\id) = \om$. 
\end{proof}

\begin{lem} \label[lem]{lemma:fundamental}
For every basic term~$t$ and homomorphism $h\colon\m{Tm}\to\WarpA$, if~$h(x)$ strongly extends~$\diag{x}{\de}$ for each $x\in{\rm Var}$ occurring in $t$, then~$\sem{t}_h$ strongly extends~$\diag{t}{\de}$.
\end{lem}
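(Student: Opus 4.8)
The plan is to proceed by structural induction on the basic term~$t$, using the four preceding lemmas to discharge each case of the inductive construction. Recall that basic terms are built from variables by means of the unit~$\ut$, the monoid multiplication, and the involution~$\invt{\ }$, so there are two base cases and two inductive cases to treat.

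For the base cases, if~$t = x$ is a variable, then~$\sem{x}_h = h(x)$ strongly extends~$\diag{x}{\de}$ directly by the hypothesis of the lemma (this is precisely the situation produced by~\Cref{lemma:nonempty}), and if~$t = \ut$, then~$\sem{\ut}_h = \id$ strongly extends~$\diag{\ut}{\de}$ by~\Cref{lemma:fundamental-id}. For the inductive cases, first observe that, since~$h$ is a homomorphism, $\sem{t_1 t_2}_h = \sem{t_1}_h \sem{t_2}_h$ and~$\sem{\invt{t_1}}_h = \inv{\sem{t_1}_h}$. Moreover, every variable occurring in a subterm~$t_1$ or~$t_2$ also occurs in the compound term, so the hypothesis on the strong extensions of~$\diag{x}{\de}$ is inherited by all subterms, and hence the induction hypothesis applies to them.

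Thus, if~$t = t_1 t_2$, then by the induction hypothesis~$\sem{t_1}_h$ and~$\sem{t_2}_h$ strongly extend~$\diag{t_1}{\de}$ and~$\diag{t_2}{\de}$, respectively, and~\Cref{lemma:fundamental-product} yields that~$\sem{t_1}_h \sem{t_2}_h = \sem{t_1 t_2}_h$ strongly extends~$\diag{t_1 t_2}{\de}$. Similarly, if~$t = \invt{t_1}$, then by the induction hypothesis~$\sem{t_1}_h$ strongly extends~$\diag{t_1}{\de}$, and~\Cref{lemma:fundamental-star} yields that~$\inv{\sem{t_1}_h} = \sem{\invt{t_1}}_h$ strongly extends~$\diag{\invt{t_1}}{\de}$. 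This completes the induction.

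The substance of the argument lies entirely in the four preceding lemmas, which were formulated precisely to match the cases of this induction, so the present proof is just the routine assembly of these results. Consequently there is no real obstacle at this stage: the genuine difficulty --- namely controlling the interaction of the involution with the partial data recorded in the diagram, and in particular tracking exactly when~$\lastw$ equals~$\om$ --- has already been overcome in the proof of~\Cref{lemma:fundamental-star}, which in turn rests on the explicit description of the involution in~\Cref{prop:inv} together with the compositional properties of~$\lastw$ from~\Cref{lemma:last-properties}.
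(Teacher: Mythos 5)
Your proof is correct and follows essentially the same route as the paper's own (very terse) proof: structural induction on the basic term, with the variable case immediate from the hypothesis and the cases $\ut$, product, and involution discharged by \Cref{lemma:fundamental-id}, \Cref{lemma:fundamental-product}, and \Cref{lemma:fundamental-star}, respectively. You simply spell out the routine details (homomorphism compatibility and inheritance of the hypothesis by subterms) that the paper leaves implicit.
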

\begin{proof}
 By induction on~$t$. The case~$t = x$ is immediate and the other cases follow from Lemmas~\ref{lemma:fundamental-product}--\ref{lemma:fundamental-id}, and the induction hypothesis. 
\end{proof}

The next proposition is then a direct consequence of~Lemmas~\ref{lemma:nonempty} and~\ref{lemma:fundamental}.

\begin{prop}\label[prop]{proposition:diagram-to-valuation}
There exists an algorithm that produces for any finite saturated sample set $\De$ and $\De$-diagram $\de$, an algorithmic description of a homomorphism $h\colon\m{Tm}\to\WarpA$ satisfying $\sem{t}_h(\de(\al)) = \de(\s{t}{\al})$ for all $t[\al] \in \De$.
\end{prop}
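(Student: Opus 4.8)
The plan is to assemble the homomorphism directly from the two constructive ingredients already developed. Given a finite saturated sample set $\De$ and a $\De$-diagram $\de$, I would first invoke \Cref{lemma:nonempty} to produce, for each variable $x\in{\rm Var}$ occurring in $\De$, an algorithmic description of a regular time warp $f_x$ that strongly extends $\diag{x}{\de}$. For variables not occurring in $\De$, the set $\diag{x}{\de}$ is empty, so any time warp---say $\id$---strongly extends it; I would simply send these to $\id$. This defines a map from ${\rm Var}$ to $W$, which extends uniquely to a homomorphism $h\colon\m{Tm}\to\WarpA$, since $\m{Tm}$ is the free algebra on ${\rm Var}$.

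Next I would verify that this $h$ has the required property. The key point is that $h(x)$ strongly extends $\diag{x}{\de}$ for \emph{every} $x\in{\rm Var}$: this holds by \Cref{lemma:nonempty} for variables occurring in $\De$, and trivially for the remaining variables since their diagrams are empty. With this hypothesis in place, \Cref{lemma:fundamental} applies to every basic term $t$, giving that $\sem{t}_h$ strongly extends---and in particular extends---$\diag{t}{\de}$. Unwinding the definition of ``extends,'' this says precisely that $\sem{t}_h(\de(\al)) = \de(\s{t}{\al})$ for all $\s{t}{\al} \in \De$, which is the desired conclusion.

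The construction is manifestly algorithmic: \Cref{lemma:nonempty} already yields algorithmic descriptions of the $f_x$, and a homomorphism out of a term algebra is determined by its values on the finitely many variables occurring in $\De$ (all others being sent to $\id$), so the whole assignment can be output effectively.

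I do not expect a serious obstacle here, since the heavy lifting has been front-loaded into Lemmas~\ref{lemma:nonempty}--\ref{lemma:fundamental}. The one point requiring a little care is the treatment of variables \emph{not} occurring in $\De$: the statement quantifies over all $t[\al]\in\De$, so only the finitely many variables appearing in $\De$ genuinely matter, but \Cref{lemma:fundamental} requires $h(x)$ to strongly extend $\diag{x}{\de}$ for \emph{every} variable occurring in $t$. This is exactly why it is convenient that the empty diagram is strongly extended by any time warp, allowing a uniform default choice and keeping the homomorphism well-defined on all of ${\rm Var}$.
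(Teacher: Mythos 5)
Your proposal is correct and follows exactly the paper's route: the paper proves this proposition as a direct consequence of \Cref{lemma:nonempty} and \Cref{lemma:fundamental}, which is precisely the assembly you describe. Your extra care about variables not occurring in $\De$ is sound (and indeed already implicit in \Cref{lemma:nonempty}, whose proof notes that any time warp strongly extends the empty diagram).
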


We have now assembled all the necessary ingredients to relate the validity of an equation in the time warp algebra to the existence of a corresponding diagram.

\begin{prop}
  \label[prop]{proposition:diagram}
Let $t_1,\ldots,t_n$ be basic terms, $\ka$ a time variable, and $\De$ the saturation of the sample set  $\{\s{t_1}{\ka},\ldots,\s{t_n}{\ka}\}$. Then $\WarpA\not\models\id \leq t_1 \jn \cdots \jn t_n$ if, and only if, there exists a $\De$-diagram $\de$ such that $\de(\ka) > \de(\s{t_i}{\ka})$ for each $i\in \{1,\ldots,n \}$. 
\end{prop}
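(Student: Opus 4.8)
The plan is to derive both directions of the equivalence as essentially immediate consequences of the two translation results already established, namely \Cref{proposition:valuation-to-diagram} and \Cref{proposition:diagram-to-valuation}. The only genuine work is to unwind the statement $\WarpA\not\models\id\le t_1\jn\cdots\jn t_n$ into a pointwise condition at a single element of $\som$, and then to feed that element through the appropriate translation.

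For the forward direction, I would begin by observing that $\WarpA\not\models\id\le t_1\jn\cdots\jn t_n$ means there is a homomorphism $h\colon\m{Tm}\to\WarpA$ with $\id\not\le\sem{t_1}_h\jn\cdots\jn\sem{t_n}_h$. Since both sides are time warps, i.e.\ maps on $\som$ compared pointwise, and joins in $\WarpA$ are computed pointwise, this inequality fails exactly when there is some $k\in\som$ with $k=\id(k)>\max_{1\le i\le n}\sem{t_i}_h(k)$, equivalently $\sem{t_i}_h(k)<k$ for every $i$. I would then invoke \Cref{proposition:valuation-to-diagram} with the set $T=\{t_1,\dots,t_n\}$ and the element $k$ to obtain a $\De$-diagram $\de$ with $\de(\ka)=k$ and $\de(\s{t_i}{\ka})=\sem{t_i}_h(k)$ for each $i$; the desired strict inequalities $\de(\ka)>\de(\s{t_i}{\ka})$ are then just a restatement of $k>\sem{t_i}_h(k)$.

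For the converse, suppose a $\De$-diagram $\de$ with $\de(\ka)>\de(\s{t_i}{\ka})$ for all $i$ is given. Here I would first note that $\De$ is finite, by \Cref{l:fin-sat}, and saturated by construction, so that \Cref{proposition:diagram-to-valuation} applies and yields a homomorphism $h\colon\m{Tm}\to\WarpA$ with $\sem{t}_h(\de(\al))=\de(\s{t}{\al})$ for all $\s{t}{\al}\in\De$. Since each $\s{t_i}{\ka}$ lies in $\De$ (the saturation contains the sample set it is built from, as $\leadsto^*$ is reflexive), setting $k\defeq\de(\ka)$ gives $\sem{t_i}_h(k)=\de(\s{t_i}{\ka})<\de(\ka)=k$ for every $i$. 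Hence $(\sem{t_1}_h\jn\cdots\jn\sem{t_n}_h)(k)<k=\id(k)$, so $\id\not\le\sem{t_1}_h\jn\cdots\jn\sem{t_n}_h$ and therefore $\WarpA\not\models\id\le t_1\jn\cdots\jn t_n$.

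Since the two cited propositions already carry all the combinatorial and order-theoretic content, I do not expect a substantial obstacle here. The only points requiring care are bookkeeping ones: confirming that the failure of the lattice inequality $\id\le t_1\jn\cdots\jn t_n$ is genuinely witnessed by a single $k\in\som$ at which every $\sem{t_i}_h$ falls strictly below the identity (using that both the order and the joins in $\WarpA$ are computed pointwise), and verifying that the samples $\s{t_i}{\ka}$ indeed belong to $\De$, so that the diagram constrains precisely the values we need.
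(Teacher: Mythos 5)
Your proof is correct and follows essentially the same route as the paper's: both directions are obtained by unwinding the pointwise failure of $\id \leq t_1 \jn \cdots \jn t_n$ at a single element of $\som$ and then applying \Cref{proposition:valuation-to-diagram} and \Cref{proposition:diagram-to-valuation}, respectively. Your additional bookkeeping (finiteness of $\De$ via \Cref{l:fin-sat}, and membership of the samples $\s{t_i}{\ka}$ in $\De$ by reflexivity of $\leadsto^*$) is left implicit in the paper but is a welcome clarification.
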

\begin{proof}
Suppose first that $\WarpA\not\models\id \leq t_1 \jn \cdots \jn t_n$. Then there exist a homomorphism $h\colon\m{Tm}\to\WarpA$ and an $m\in \som$ such that $m = \id(m) > \sem{t_i}_h(m)$ for each $i\in \{1,\ldots,n \}$. Hence, by \Cref{proposition:valuation-to-diagram}, there exists a $\De$-diagram $\de$ such that  $\de(\ka) = m > \sem{t_i}_h(m) = \de(\s{t_i}{\ka})$ for each $i\in \{1,\ldots,n \}$. 

For the converse, suppose that there exists a $\De$-diagram $\de$ such that $\de(\ka) > \de(\s{t_i}{\ka})$ for each $i\in \{1,\ldots,n \}$.  Then, by \Cref{proposition:diagram-to-valuation}, there exists a homomorphism $h\colon\m{Tm}\to\WarpA$ such that $\sem{t_i}_h(\de(\ka)) = \de(\s{t_i}{\ka})$ for each $i\in \{ 1,\ldots,n\}$. So $\id(\de(\ka) ) = \de(\ka) > \sem{t_i}_h(\de(\ka))$ for each $i\in \{ 1,\ldots,n\}$. Hence  $\WarpA\not\models\id \leq t_1 \jn \cdots \jn t_n$. 
\end{proof}

\begin{runexa}
The $\De$-diagram $\de\colon \De \to \som$ for $\De =  \sat{\{ \s{x \invt{x}}{\ka} \}}$ from parts~\ref{diagram} and~\ref{construction} of the running example witnesses the failure of the equation $\ut \leq x \invt{x}$ in $\WarpA$, since $\de(\ka) = 1 > 0 = \de( \s{x \invt{x}}{\ka})$.
\end{runexa}

Observe now that the time warps constructed in the proof of \Cref{lemma:nonempty} are regular. Hence, since, by \Cref{prop:regular-subalgebra}, the set of regular time warps forms a subalgebra $\m R$ of $\WarpA$, we obtain the following correspondence:

\begin{thm}\label{thm:regular}
For any $s,t\in{\rm Tm}$, $\WarpA \models s\eq t$ if, and only if, $\m{R}\models s\eq t$.
\end{thm}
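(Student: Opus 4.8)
The plan is to reduce the statement $\WarpA \models s \eq t \iff \m{R} \models s \eq t$ to the special form handled by the machinery just developed. The forward implication $\WarpA \models s \eq t \implies \m{R} \models s \eq t$ is free, since $\m{R}$ is a subalgebra of $\WarpA$ by \Cref{prop:regular-subalgebra}, and every subalgebra satisfies all equations that hold in the larger algebra. So the real content is the converse, and to prove it I would prove the contrapositive: if $\WarpA \not\models s \eq t$, then $\m{R} \not\models s \eq t$.

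First I would invoke the reductions established at the start of \Cref{sec:diagrams}. By \Cref{lem:normal} and the observation that $\WarpA \models s \eq t$ iff $\WarpA \models \ut \le (s \ld t) \mt (t \ld s)$, together with the fact that a meet is below $\ut$ iff each conjunct is, the failure $\WarpA \not\models s \eq t$ yields basic terms $t_1, \dots, t_n$ with $\WarpA \not\models \ut \le t_1 \jn \cdots \jn t_n$. Note that these reductions are carried out by rewriting terms using equations valid in \emph{all} fully distributive involutive residuated lattices, hence valid in $\m{R}$ as well; so it suffices to show that $\m{R} \not\models \ut \le t_1 \jn \cdots \jn t_n$ for these same basic terms.

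Next I would apply \Cref{proposition:diagram}: since $\WarpA \not\models \id \le t_1 \jn \cdots \jn t_n$, there exists a $\De$-diagram $\de$, where $\De$ is the (finite, by \Cref{l:fin-sat}) saturation of $\{\s{t_1}{\ka}, \dots, \s{t_n}{\ka}\}$, satisfying $\de(\ka) > \de(\s{t_i}{\ka})$ for each $i$. Feeding this diagram into the algorithm of \Cref{proposition:diagram-to-valuation} produces a homomorphism $h \colon \m{Tm} \to \WarpA$ with $\sem{t_i}_h(\de(\ka)) = \de(\s{t_i}{\ka})$ for each $i$, whence $\id(\de(\ka)) = \de(\ka) > \sem{t_i}_h(\de(\ka))$ witnesses the failure of the equation.

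The crucial point --- and the step I expect to carry the weight of the argument --- is that this homomorphism actually lands in the subalgebra $\m{R}$. Here I would appeal to the remark immediately preceding the theorem: the time warps assigned to variables by \Cref{proposition:diagram-to-valuation} are built by the construction in the proof of \Cref{lemma:nonempty}, and that construction produces \emph{regular} time warps (each is piecewise given by $\min(j_2, j_1 \oplus (i - i_1))$ on finitely many intervals and eventually constant or linear). Thus $h(x) \in \m{R}$ for every variable $x$, and since $\m{R}$ is closed under all the operations by \Cref{prop:regular-subalgebra}, the composite values $\sem{t_i}_h$ remain in $\m{R}$. Restricting $h$ to map into $\m{R}$ therefore gives a homomorphism $\m{Tm} \to \m{R}$ witnessing $\m{R} \not\models \ut \le t_1 \jn \cdots \jn t_n$, and hence $\m{R} \not\models s \eq t$. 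The only thing requiring care is confirming that the extension in \Cref{lemma:nonempty} is genuinely regular in the precise sense of eventually constant or eventually linear --- which follows because $f$ is constant beyond $\de(\last(x))$ when that value is finite, and otherwise agrees with the eventually linear behavior forced on the final interval.
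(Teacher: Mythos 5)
Your proposal is correct and follows essentially the same route as the paper: the forward direction is the subalgebra observation from \Cref{prop:regular-subalgebra}, and the converse combines the reduction to basic-term joins (\Cref{lem:normal}), the diagram correspondence (\Cref{proposition:diagram}, \Cref{proposition:diagram-to-valuation}), and the key observation that the time warps built in \Cref{lemma:nonempty} are regular --- which is exactly the paper's (very terse) justification. Your added verification that the constructed extensions are genuinely eventually constant or eventually linear, and that the normal-form reductions remain valid in $\m{R}$, simply makes explicit what the paper leaves implicit.
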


Finally, let $t_1,\ldots,t_n$ be basic terms, $\ka$ a time variable, and $\De$ the saturation of the sample set  $\{\s{t_1}{\ka},\ldots,\s{t_n}{\ka}\}$. Define also $\mathsf{fail} := \{  \s{t_1}{\ka} \ale\ka,\dots,\s{t_n}{\ka} \ale \ka \}$ and let $\psi_\De$ be the existential closure of  the $\tau$-formula $ \bigandd (\Si_\De \cup \mathsf{fail})$.
   Then, by \Cref{proposition:diagram}, it follows that $\WarpA\not\models\id \leq t_1 \jn \cdots \jn t_n$ if, and only if, there exists a $\fosom$-valuation $\de$ such that $\fosom,\de \models \Si_\De \cup \mathsf{fail}$ if, and only if,  $\fosom \models \psi_\De$.  Since the first-order theory of $\langle \som, \aleq \rangle$ is decidable (see, e.g., \cite{LL66}) and the operations $0$, $\om$, and $\aS$ are definable in $\langle \som, \aleq \rangle$, we obtain the following result:
   
\begin{thm}\label{thm:decidable}
The equational theory of $\WarpA$ is decidable.
\end{thm}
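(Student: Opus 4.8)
The plan is to assemble the decidability result directly from the machinery developed earlier in this section, reducing the equational theory of $\WarpA$ to a decidable first-order satisfiability problem. First I would invoke \Cref{prop:simpleform}, which reduces the full equational theory to deciding statements of the form $\WarpA \models \ut \le t_1 \jn \cdots \jn t_n$ for basic terms $t_1,\dots,t_n$. Thus it suffices to produce an algorithm for this restricted class of equations.

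The central step is to translate each such equation into a first-order sentence over $\fosom$. Given basic terms $t_1,\dots,t_n$ and a fresh time variable $\ka$, I would form the sample set $\{\s{t_1}{\ka},\dots,\s{t_n}{\ka}\}$ and take its saturation $\De$. By \Cref{l:fin-sat}, $\De$ is finite, so the formula sets $\mathsf{struct}_\De$, $\mathsf{log}_\De$, and $\mathsf{inv}_\De$ comprising $\Si_\De$, together with the failure conjunction $\mathsf{fail} = \{\s{t_1}{\ka} \ale \ka,\dots,\s{t_n}{\ka} \ale \ka\}$, are all finite sets of quantifier-free $\tau$-formulas. Let $\psi_\De$ be the existential closure of $\bigandd(\Si_\De \cup \mathsf{fail})$. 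The key correspondence is \Cref{proposition:diagram}: a counterexample to $\ut \le t_1 \jn \cdots \jn t_n$ exists precisely when there is a $\De$-diagram $\de$ with $\de(\ka) > \de(\s{t_i}{\ka})$ for each $i$. Unwinding the definition of a $\De$-diagram as a $\fosom$-valuation satisfying $\Si_\De$, this says exactly that $\fosom, \de \models \Si_\De \cup \mathsf{fail}$ for some valuation $\de$, which in turn holds if and only if $\fosom \models \psi_\De$.

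The remaining step is to discharge the satisfiability question by appeal to a decidable theory. The first-order theory of the chain $\langle \som, \aleq \rangle$ is decidable (by, e.g., quantifier elimination for discrete linear orders with endpoints, as in~\cite{LL66}), and the constants $0$, $\om$, and the successor function $\aS$ are all first-order definable from $\aleq$ in $\som$ --- $0$ and $\om$ as the least and greatest elements, and $\aS(x)$ as the immediate successor where it exists. Hence $\psi_\De$, though written in the richer signature $\tau$, is equivalent to a sentence in the language of $\langle \som, \aleq \rangle$, whose truth is decidable. Combining the finiteness of the saturation, the effective construction of $\Si_\De$ and $\mathsf{fail}$, the equivalence furnished by \Cref{proposition:diagram}, and the decidability of the ambient order theory, yields an algorithm deciding $\WarpA \models \ut \le t_1 \jn \cdots \jn t_n$, and therefore, via \Cref{prop:simpleform}, the whole equational theory of $\WarpA$.

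I do not expect a serious obstacle at this stage, since all the difficult work --- establishing that saturations are finite and proving the diagram correspondence in both directions --- has already been carried out in Lemmas and Propositions above. The only point requiring care is the explicit verification that $0$, $\om$, and $\aS$ are definable in $\langle \som, \aleq\rangle$, so that the decidability of the order theory genuinely transfers to sentences in the signature $\tau$; this is routine but must be stated to justify the final reduction.
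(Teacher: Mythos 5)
Your proposal is correct and follows essentially the same route as the paper: reduce via \Cref{prop:simpleform} to inequations $\ut \le t_1 \jn \cdots \jn t_n$ for basic terms, use \Cref{l:fin-sat} and \Cref{proposition:diagram} to translate the existence of a counterexample into the truth of the existential sentence $\psi_\De$ in $\fosom$, and conclude by the decidability of the first-order theory of $\langle \som, \aleq \rangle$ together with the definability of $0$, $\om$, and $\aS$ in that structure. The only cosmetic difference is that you spell out explicitly the finiteness of the saturation and the definability check, which the paper leaves largely implicit.
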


%%%%%%%%%%%%%%%%%%%%%%%%%%%%%%%%%%%%%%%%%%%%%%%%%%%%%%%%%%%%%%%%%%%

\section{Implementation}\label{sec:implementation}

We have implemented the decision procedure described in Section~\ref{sec:diagrams} as a software tool. The tool is written in the OCaml programming language~\cite{OCaml} and makes use of the Z3 theorem prover~\cite{Z3}. It reads an equation between terms in the language $\{\mt,\jn, \cdot, \ld, \rd,\mathop{\invt{}}, \ut \}$ from the command line and translates it into a Satisfiability Modulo Theory~(SMT) query. This query expresses the existence of corresponding diagrams for the equation, constructed as in Section~\ref{sec:diagrams}, and is unsatisfiable if, and only if, the equation is satisfied by the time warp algebra. If the query is satisfiable, the tool translates the model provided by Z3 into a homomorphism from $\m{Tm}$ to $\WarpA$ falsifying the initial problem.

\subsection*{Outline}

We summarize the main steps of the decision procedure below, mentioning details of the implementation in passing.

\begin{enumerate}
\item
  The input problem is normalized into an inequation of the form~$\ut\le t$.
  This is always possible since:
  \begin{itemize}
  \item
    any inequation~$u \le v$ can be rewritten into the
    inequation~$\ut\le u \ld v$;

  \item
    an equation~$u \eq v$ can be rewritten into the inequation
    $\ut\le (u\ld v)\mt (v\ld u)$.

  \end{itemize}

\item
  The inequation $\ut\le t$ is rewritten, as per~\Cref{lem:normal}, into an inequation of
  the form
  \[
 \ut\le\bigmt_{i=1}^m \bigjn_{j=1}^{n_i} t_{i,j},
  \]
  where each~$t_{i,j}$ is a basic term, by first iteratively replacing subterms of the form $r\ld s$ and $s\rd t$ with $(s'r)'$ and $(rs')'$, respectively, then  iteratively distributing joins over meets and the monoid multiplication over meets and joins, pushing the involution inwards using the De Morgan laws.   Each inequation $\ut\le\bigjn_{j=1}^{n_i} t_{i,j}$ can then be checked independently for $i\in\{1,\dots,m\}$.

\item
  Optionally, each basic term~$t_{i,j}$ can be rewritten using simple
  algebraic rules.
  In effect, this step computes normal forms for the following convergent
  rewrite system:
  \begin{mathpar}
    t \cdot \ut \to t

    \ut \cdot t \to t

    t'' \to t.
  \end{mathpar}
  This simplification step is optional but easy to implement and can
  drastically reduce the cost of the rest of the decision procedure.

\item
  The tool computes for each~$i$ in~$\{1,\dots,m\}$,  the saturated sample set
  \[
    \De_i
    :=
    \sat{
      \{
      \s{t_{i, 1}}{\ka_i}, \dotsc, \s{t_{i, n_i}}{\ka_i}
      \}
    },
  \]
 where~$\ka_1, \dotsc, \ka_n$ are distinct sample variables.
  Samples are represented as first-order terms with maximal
  sharing~\cite{FilliatreConchon-2006}, so that sample equations can be tested in
  constant time.
  This accelerates the membership tests used to check whether saturation is
  complete.

\item
  For each~$i$ in~$\{1,\dots,m\}$, the set~$\De_i$  is traversed to generate an SMT query built from the  
  quantifier-free first-order formulas described in~\Cref{sec:diagrams}.
  These first-order formulas are expressed over the theory known as~\emph{Integer Difference
    Logic} in the SMT literature~\cite{SMTLIB}, which corresponds to the
  first-order theory of the
  structure~$\integers=\langle \mathbb{Z}, \Zle, \Zz,
  \ZaS \rangle$, where $\Zle$ is the natural order on $\mathbb{Z}$,~$\Zz =0$, and $\ZaS(m) = m+1$, for each $m\in\mathbb{Z}$.
  We encode~$\fosom$ in this structure by representing~$\om^{\fosom}$
  as~$\Zz$ and~$n\in\om$ as~$\ZaS n$.
  More precisely, writing~$\encode{-}$ for the encoding map, we let
  \begin{align*}
    \encode{0^{\fosom}}
    & :=
    \aS^{\integers} \Zz
    \\
    \encode{\om^{\fosom}}
    & :=
    \Zz
    \\
    \encode{\xi \eq \aS^{\fosom} t}
    & :=
    (\encode{t} \eq \Zz \andd \xi \eq \Zz)
    \orr
    (\ZaS \Zz \Zle \encode{t}
    \andd
    \xi \eq \ZaS \encode{t})
    \\
    \encode{t \preceq^{\fosom} u}
    & :=
    (\encode{u} \eq \Zz)
    \orr
    (\ZaS \Zz \Zle \encode{t}
    \andd
    \encode{t} \Zle \encode{u}),
  \end{align*}
  where~$\xi$ denotes an arbitrary first-order variable.
  Since the operation~$\aS^{\fosom}$ is encoded as a formula rather than a
  term in the target theory, we have assumed without loss of generality that it
  only appears at the root of an equation.
  Additionally, for each first-order variable~$\xi$, the tool generates the
  formula~$\Zz \Zle \xi$.

\item
  Finally, the~$i$th query is passed to the SMT solver.
  It is satisfiable if, and only if, $\WarpA\not\models\ut\le\bigjn_{j=1}^{n_i} t_{i,j}$. 
  In this case, the tool uses the diagram returned by the SMT solver to construct a homomorphism $h\colon\m{Tm}\to\WarpA$ and element $k\in\som$ satisfying $k>\sem{t_{i,j}}_h(k)$, for each $j\in\{1,\dots,n_i\}$,  following the approach described in the proof of~\Cref{lemma:nonempty}.

\end{enumerate}

\subsection*{Examples}

The running example~$\ut\le x x'$ considered in \Cref{sec:diagrams} can be handled 
by our implementation. 
The generated SMT query contains~139 clauses.
Z3 checks its satisfiability and constructs the diagram
\begin{align*}
  0 &
  = \de(\s{x x'}{\ka})
  = \de(\s{x}{\s{x'}{\ka}})
  = \de(\s{x x'}{\last(x x')})
  = \de(\s{x}{\s{x'}{\last(x x')}})
  = \de(\s{x'}{\last(x x')})
  \\ &
  = \de(\s{x}{\s{x'}{\last(x')}})
  = \de(\s{x}{\suc(\s{x'}{\last(x x')})})
  = \de(\last(x x'))
  \\
  1 &
  = \de(\suc(\s{x'}{\last(x x')}))
  = \de(\last(x'))
  \\
  2 &
  = \de(\s{x}{\suc(\s{x'}{\ka})})
  = \de(\s{x'}{\ka})
  = \de(\s{x}{\suc(\s{x'}{\last(x')})})
  = \de(\s{x'}{\last(x')})
  = \de(\ka)
  \\
  3 &
  = \de(\suc(\s{x'}{\last(x')}))
  = \de(\suc(\s{x'}{\ka}))
  = \de(\s{x}{\last(x)})
  \\
  4 &
  = \de(\last(x)),
\end{align*}
which expresses that any time warp~$f$ satisfying~$f(n) = 1$ for~$n <
3$,~$f(3) = 2$, and~$f(4) = 3$, satisfies~$f(\inv{f}(1)) < 1$.

As a further example, consider the inequation~$\ut\le x$, which is small enough to make each step of the decision
procedure explicit. The corresponding saturated sample set is 
\[
  \De
  :=
  \{ \ka, \s{x}{\ka}, \last(x), \s{x}{\last(x)} \}.
\]
The tool traverses~$\De$ to generate a first-order formula expressing
the existence of a diagram.
Expressed in mathematical notation, this formula is
\begin{align*}
  \exists
  \FOv{\ka}, \FOv{\s{x}{\ka}}, \FOv{\last(x)},
  \FOv{\s{x}{\last(x)}}, &
  \bigandd_{\alpha \in \De} \left(\Zz \Zle \alpha\right)
  \\ & \andd
  (\FOv{\ka} \Ole \FOv{\last(x)}
  \Rightarrow
  \FOv{\s{x}{\ka}} \Ole \FOv{\s{x}{\last(x)}})
  \\ & \andd
  (\FOv{\last(x)} \Ole \FOv{\ka}
  \Rightarrow
  \FOv{\s{x}{\last(x)}} \Ole \FOv{\s{x}{\ka}})
  \\ &
  \andd
  \bigandd_{\alpha \in \{ \ka, \last(x) \}}
  \left(
    \FOv{\alpha} \eq \Oz
    \Rightarrow
    \FOv{\s{x}{\alpha}} \eq \Oz
  \right)
  \\ &
  \andd
  \bigandd_{\alpha \in \{ \ka, \last(x) \}}
  \left(
    \FOv{\last(x)} \Ole \FOv{\alpha}
    \Rightarrow
    \FOv{\s{x}{\alpha}} \eq \FOv{\s{x}{\last(x)}}
  \right)
  \\ &
  \andd
  \FOv{\last(x)} \eq \Oom
  \Rightarrow
  \FOv{\s{x}{\last(x)}} \eq \Oom
  \\ &
  \andd \FOv{\s{x}{\ka}} \Zlt \FOv{\ka},
\end{align*}
where the symbols~$\Ole$,~$\Oz$,~$\Oom$, and~$\OaS$ stand for their actual encodings
in terms of~$\Zle$,~$\Zz$, and~$\ZaS$. 
\Cref{fig:example-smt} displays the same formula as a query in concrete SMT-LIB
syntax.
The final step is to check its validity using Z3.
The~SMT solver builds the diagram
\begin{align*}
  0 &
  = \de(\s{x}{\last(x)})
  = \de(\s{x}{\ka})
  \\
  1 & =
  \de(\last(x))
  \\
  2 & =
  \de(\ka),
\end{align*}
expressing in particular that mapping $x$ to the constant time warp~$\bot$ falsifies~$\ut\le x$ in $\WarpA$. 

\begin{figure}
  \begin{verbatim}
(declare-fun !0 () Int)
(declare-fun !1 () Int)
(declare-fun !2 () Int)
(declare-fun !3 () Int)
(assert (<= 0 !0))
(assert (<= 0 !1))
(assert (<= 0 !2))
(assert (<= 0 !3))
(assert (let ((a!1 (or (= !2 0) (and (not (= !3 0)) (<= !3 !2))))
      (a!2 (or (= !0 0) (and (not (= !1 0)) (<= !1 !0)))))
  (=> a!1 a!2)))
(assert (let ((a!1 (or (= !3 0) (and (not (= !2 0)) (<= !2 !3))))
      (a!2 (or (= !1 0) (and (not (= !0 0)) (<= !0 !1)))))
  (=> a!1 a!2)))
(assert (=> (= !3 1) (= !1 1)))
(assert (let ((a!1 (or (= !3 0) (and (not (= !2 0)) (<= !2 !3)))))
  (and (=> a!1 (= !0 !1)) (=> (= !0 !1) a!1))))
(assert (=> (= !2 1) (= !0 1)))
(assert (let ((a!1 (or (= !2 0) (and (not (= !2 0)) (<= !2 !2)))))
  (and (=> a!1 (= !0 !0)) (=> (= !0 !0) a!1))))
(assert (=> (= !2 0) (= !0 0)))
(assert (let ((a!1 (or (= !3 0) (and (not (= !1 0)) (<= !1 !3)))))
  (and a!1 (not (= !1 !3)))))
\end{verbatim}
  \caption{SMT query whose models correspond to diagrams for~$\ut\le x$}
  \label{fig:example-smt}
\end{figure}

%%%%%%%%%%%%%%%%%%%%%%%%%%%%%%%%%%%%%%%%%%%%%%%%%%%%%%%%%%%%%%%%%%%

\subsection*{Computational complexity}
The worst-case computational complexity of the decision procedure is exponential
in the size of the initial equation.
This complexity breaks down as follows.
Steps~2 and~4 have a running time that is exponential in the size of their input data.
In particular, the size of the saturated sample set for a given basic
term~$t_{i,j}$ is exponential in the size of~$t_{i,j}$ in general.
The size of the SMT query generated during step~5 is polynomial in the number of
samples in the input sample set.
Finally, the SMT solver has a running time that is exponential in the size of the SMT
query in the worst case, as the conjunctive fragment of difference logic is
decidable in polynomial time~\cite{Dill-1989}.

%%%%%%%%%%%%%%%%%%%%%%%%%%%%%%%%%%%%%%%%%%%%%%%%%%%%%%%%%%%%%%%%%%%

\subsection*{Software complexity}
The entirety of the implementation, not counting Z3 and its OCaml interface,
takes slightly less than 1900 lines of code.
The core procedure and data structures represent around 550 lines of code.
The code is available online as libre
software.\footnote{\url{https://github.com/adrieng/twmc}}

%%%%%%%%%%%%%%%%%%%%%%%%%%%%%%%%%%%%%%%%%%%%%%%%%%%%%%%%%%%%%%%%%%%

\section{Concluding Remarks}\label{sec:concluding}

The main contribution of this paper is a proof that the equational theory of the time warp algebra $\WarpA$ is decidable, supported by an implementation of a corresponding decision procedure. There remain, however, many open problems regarding computational and structural properties of $\WarpA$ and related algebras of join-preserving maps on complete chains.  Below, we briefly discuss some of these problems and potential avenues for further research.

%%%%%%%%%%%%%%%%%%%%%%%%%%%%%%%%%%%%%%%%

\subsection*{An axiomatic description}

Although we have provided an algorithm to decide equations in the time warp algebra, an axiomatic description of its equational theory is still lacking. The algebra generates a variety of fully distributive involutive residuated lattices satisfying the equations $\ut'\le\ut$ and $\ut\eq\ut'\pd(\ut'\ld\ut)$, but not much more is known. An axiomatization could also provide the basis for developing a sequent calculus for reasoning in $\WarpA$ along the lines of Yetter's cyclic linear logic without exponentials~\cite{Yet90}.

%%%%%%%%%%%%%%%%%%%%%%%%%%%%%%%%%%%%%%%%

\subsection*{Computational complexity}

As explained in Section~\ref{sec:implementation}, the decision procedure described in this paper provides an exponential upper bound for the computational complexity of deciding equations in the time warp algebra. A lower bound for this problem follows from the fact that the involution-free reduct of $\WarpA$ generates the variety of distributive $\ell$-monoids, which has a co-NP-complete equational theory~\cite{CGMS21}. Tight complexity bounds have yet to be determined, however.

%%%%%%%%%%%%%%%%%%%%%%%%%%%%%%%%%%%%%%%%

\subsection*{Applications to graded modalities.}

The implementation described in Section~\ref{sec:implementation} can be readily
integrated in a type checker for a programming language with a graded modality
whose gradings are first-order terms over $\WarpA$, generalizing
previous work~\cite{Guatto-2018}.
In broad strokes, such a type checker reduces each subtyping problem arising
during type checking to a finite set of inequalities in the time warp
algebra.
The simplest interesting case is that of checking whether the type~$\Box_t A$ is
a subtype of~$\Box_u A$, which reduces to checking~$\m{W} \models u \le t$.
The other cases in reducing type-checking problems to
universal-algebraic problems depend on the exact language of types under
consideration, including its subtyping relationship, which is beyond the scope
of the current paper.
Let us mention, however, that certain programming language features, such as the
modeling of fixed computation rates, may require extending the language of
gradings, and hence also the decision procedure.
We discuss a simple example of such an extension in the next paragraph.

%%%%%%%%%%%%%%%%%%%%%%%%%%%%%%%%%%%%%%%%

\subsection*{Extending the language}

For certain applications, it may be profitable to extend the language of time warps with further operations. To illustrate, let us describe how `definable' time warps can be added as constants to the language, while preserving decidability of the resulting equational theory. Let $X$ be a countably infinite set of fresh variables for building first-order formulas. We call a time warp $f$ \emph{definable} if there exist first-order formulas $\phi_f(x,y)$ and $\psi_f(z)$ in the signature $\tau = \{\aleq, \aS, 0, \om \}$ with free variables $x,y$, and $z$, respectively such that for every $\fosom$-valuation $v\colon X \to \som$,
\[
\fosom, v \models \phi_f(x,y)\:\iff\:f(v(x)) = v(y)\quad\text{ and }\quad\fosom, v \models \psi_f(z) \:\iff\: v(z) = \lastw(f).
\]
For example, the time warp $\top$ satisfying $\top(0) = 0$ and $\top(n) = \om$, for each $n\in \som{\setminus}\{0\}$, is definable by  $\phi_\top(x,y) \coloneqq (x \eq 0 \To y \eq 0) \andd (0\ale x \To y \eq \om)$ and $\psi_\top(z) \coloneqq z \eq \aS(0)$. Similarly, the time warp $\bot$ satisfying $\bot(n) = 0$, for each $n\in \som$, is definable by $\phi_\bot(x,y) \coloneqq y \eq 0$ and $\psi_\bot(z) \coloneqq z\eq 0$.

Let $\Dwarp$ denote the set of definable time warps. Given any $F = \{f_1,\dots, f_n\}\subseteq \Dwarp$, let $\WarpA[F]$ denote the algebra  $\langle \res(\som), \mt , \jn, \circ, \inv{}, \id, f_1,\dots, f_1 \rangle$. We call a term in the signature  $\{\cdot,\mathop{'}, \ut, f_1, \dots, f_n \}$  \emph{$F$-basic} and define \emph{$F$-samples} analogously to samples by  allowing $F$-basic terms in the construction. The notion of saturation is extended to $F$-sample sets, and following the proof of  \Cref{l:fin-sat} shows that also the saturation of a finite $F$-sample set is finite. The notion of a diagram is then extended to saturated $F$-sample sets $\De$ by defining the sets $\mathsf{struct}_\De$, $\mathsf{log}_\De$,  $\mathsf{inv}_\De$ as before, and also
\[
\mathsf{const}_\De \coloneqq \{\phi_f(\al,\s{f}{\al}) \mid \s{f}{\al} \in \De, f \in F \} \;  \cup \{\psi_f(\last(f)) \mid \last(f) \in \De, f\in F\}.
\]
Let $\Si_\De \coloneqq \mathsf{struct}_\De \cup \mathsf{log}_\De \cup \mathsf{inv}_\De \cup \mathsf{const}_\De$. A \emph{$\De$-diagram} is a $\fosom$-valuation $\de\colon \De \cup X \to \som$ such that $\fosom, \de \models \Si_\De$. It is clear that \Cref{proposition:valuation-to-diagram} extends to saturated $F$-sample sets, using the definition of a definable time warp and the fact that homomorphisms from the term algebra into $\WarpA[F]$ map $f$ to $f$, for each $f\in F$. Similarly, since every $f\in F$ is completely described by the formulas $\phi_f(x,y)$ and $\psi_f(z)$, also \Cref{proposition:diagram-to-valuation}, extends  to saturated $F$-sample sets.  Hence, arguing as in \Cref{sec:diagrams}, the equational theory of $\WarpA[F]$ is decidable.

%%%%%%%%%%%%%%%%%%%%%%%%%%%%%%%%%%%%%%%%
 
\subsection*{A relational approach}
 
We end this paper by discussing a different, relational point of view on time warps, 
which we believe will play a role in generalizing and extending the methods of this paper.
It is well-known that, if $\m{L}$ is a sufficiently well-behaved lattice, then any completely 
join-preserving function on $\m{L}$ can be uniquely described via a certain binary relation on 
the set $P$ of completely join-prime elements of $\m{L}$. 
The binary relations that occur in this way are exactly the \emph{distributors} on $P$, 
when viewed as a posetal category. This raises the 
question as to whether the decision procedure we give here in the case 
$\m{L} = \om^+$ can be 
generalized to the algebraic structure of distributors on an object of a sufficiently
well-behaved category $P$. 
In order to illustrate the idea, we explain here how the results of the paper can
be alternatively understood in this relational language.
% The join-preserving functions on completely distributive complete lattices 
% that are generated by their completely join-prime elements may also be studied by 
% looking at relations on the poset of completely join-prime elements. 
% We show here what this result means in the setting of time warps and how it may lead to 
% an alternative approach to deciding equations in the time warp algebra and related structures, 
% to be investigated in future work.

Let us write $P = \{1, 2, \dots \}$ for the set of positive natural numbers, so that $\som$ is isomorphic to the lattice of downward closed subsets of $P$, via the function that sends $x \in \som$ to ${\downarrow} x \cap P$. A time warp $f \colon \som \to \som$ may then be viewed alternatively as a binary relation $R \subseteq P \times P$ that is \emph{monotone} (also called a {\em weakening relation} in, e.g.,~\cite{GJ20}): that is, for any $x, x', y, y' \in P$, if $x' \leq x$, $x R y$, and $y \leq y'$, then $x' R y'$. Indeed, given a time warp $f$, the binary relation $R_f$ defined by
\[ 
R_f := \{(x,y) \in P^2 \mid x \leq f(y)\}
\] 
is clearly monotone, and conversely, given a monotone binary relation $R$ on $P$, the function $f \colon P \to \som$ defined, for $y \in P$, by  
\[ 
f(y) := \bigvee \{ x \in P \mid xRy\} 
\] 
is order-preserving and hence extends uniquely to a time warp by setting $f(0):= 0$ and $f(\om) := \bigvee_{p \in P} f(p)$. These assignments $f\mapsto R_f$ and $R \mapsto f_R$ yield an order-isomorphism between the lattice of time warps and the lattice $\mathcal{M}(P)$ of monotone binary relations on $P$, ordered by inclusion. Let us denote by $\ast$ the associative operation of relational composition on $\mathcal{M}(P)$, that is, $R \ast S := \{(x,z) \in P^2\mid xRy \text{ and } ySz \text{ for some } y \in P\}$. Note that the natural order relation, $\leq_P$, on $P$ is a neutral element in $\mathcal{M}(P)$ for this composition operation $\ast$. Straightforward computations show that, for any $f, g \in W$,
\begin{align*} 
R_{f \circ g} &= R_f \ast R_g \quad\text{ and }\quad  R_{f^\star} &= \{ (x,y) \in P^2 \mid f(x) \leq y - 1 \} = \{(x,y) \in P^2 \mid y \nleq f(x) \},
\end{align*}
where we note that the latter can also be written as the complementary relation of the converse of the relation $R_f$. Hence, the lattice isomorphism between the lattice of time warps and $\mathcal{M}(P)$ extends to an isomorphism between involutive residuated lattices, by equipping the lattice $\mathcal{M}(P)$ with intersection, union, relational composition $\ast$, the unary operation $(-)^\star$ given by $R^\star := P^2{\setminus}(R^{-1})$, and the neutral element $\leq_P$.

From the above considerations, since the isomorphisms are effective, it follows that questions about the theory of the structure $\WarpA$ may be effectively translated into questions about the structure $\mathcal{M}(P)$. Suppose that $t$ is a term in the language of involutive residuated lattices that uses variables $x_1, \dots, x_n$. The above isomorphism allows us to translate the term $t$ into a formula $T(k,k',X_1,\dots,X_n)$ of second-order logic over the structure $\langle P, \leq\rangle$, where the $X_i$ are binary predicate symbols and $k$ and $k'$ are first-order variables.

%------------------------------------------------------------------------------
\bibliographystyle{alphaurl}
\bibliography{bibliography}

\newcommand{\etalchar}[1]{$^{#1}$}
\begin{thebibliography}{EGGHK18}

\bibitem[BFT16]{SMTLIB}
Clark Barrett, Pascal Fontaine, and Cesare Tinelli.
\newblock {The Satisfiability Modulo Theories Library (SMT-LIB)}.
\newblock https://smtlib.cs.uiowa.edu/, 2016.

\bibitem[BMSS12]{BirkedalMogelbergSchwinghammerStovring-2012}
Lars Birkedal, Rasmus~Ejlers M{\o}gelberg, Jan Schwinghammer, and Kristian
  St{\o}vring.
\newblock First steps in synthetic guarded domain theory: step-indexing in the
  topos of trees.
\newblock {\em Log. Methods Comput. Sci.}, 8(4), 2012.
\newblock \href {https://doi.org/10.2168/LMCS-8(4:1)2012}
  {\path{doi:10.2168/LMCS-8(4:1)2012}}.

\bibitem[BT03]{BT03}
Kevin Blount and Constantine Tsinakis.
\newblock The structure of residuated lattices.
\newblock {\em Int. J.~Algebr. Comput.}, 13(4):437--461, 2003.
\newblock \href {https://doi.org/10.1142/S0218196703001511}
  {\path{doi:10.1142/S0218196703001511}}.

\bibitem[CGMS22]{CGMS21}
Almudena Colacito, Nikolaos Galatos, George Metcalfe, and Simon Santschi.
\newblock From distributive $\ell$-monoids to $\ell$-groups, and back again.
\newblock {\em J. Algebra}, 601:129--148, 2022.
\newblock \href {https://doi.org/10.1016/j.jalgebra.2022.02.012}
  {\path{doi:10.1016/j.jalgebra.2022.02.012}}.

\bibitem[CP96]{CaspiPouzet-1996}
Paul Caspi and Marc Pouzet.
\newblock Synchronous {Kahn} networks.
\newblock In Robert Harper and Richard~L. Wexelblat, editors, {\em Proceedings
  of the 1996 {ACM} {SIGPLAN} International Conference on Functional
  Programming, {ICFP} 1996, Philadelphia, Pennsylvania, USA, May 24-26, 1996},
  pages 226--238. ACM, 1996.
\newblock \href {https://doi.org/10.1145/232627.232651}
  {\path{doi:10.1145/232627.232651}}.

\bibitem[Dil89]{Dill-1989}
David~L. Dill.
\newblock Timing assumptions and verification of finite-state concurrent
  systems.
\newblock In Joseph Sifakis, editor, {\em Proc. Automatic Verification Methods
  for Finite State Systems, International Workshop, Grenoble, France, June
  12-14, 1989,}, volume 407 of {\em LNCS}, pages 197--212. Springer, 1989.
\newblock \href {https://doi.org/10.1007/3-540-52148-8\_17}
  {\path{doi:10.1007/3-540-52148-8\_17}}.

\bibitem[dMB08]{Z3}
Leonardo~Mendon{\c{c}}a de~Moura and Nikolaj~S. Bj{\o}rner.
\newblock {Z3:} an efficient {SMT} solver.
\newblock In C.~R. Ramakrishnan and Jakob Rehof, editors, {\em Tools and
  Algorithms for the Construction and Analysis of Systems, 14th International
  Conference, {TACAS} 2008, Held as Part of the Joint European Conferences on
  Theory and Practice of Software, {ETAPS} 2008, Budapest, Hungary, March
  29-April 6, 2008. Proceedings}, volume 4963 of {\em Lecture Notes in Computer
  Science}, pages 337--340. Springer, 2008.
\newblock \href {https://doi.org/10.1007/978-3-540-78800-3\_24}
  {\path{doi:10.1007/978-3-540-78800-3\_24}}.

\bibitem[EGGHK18]{Eklund2018}
Patrik Eklund, Javier Guti{\'e}rrez~Garc{\'{\i}}a, Ulrich H{\"o}hle, and Jari
  Kortelainen.
\newblock {\em Semigroups in complete lattices. {Quantales}, modules and
  related topics}, volume~54 of {\em Dev. Math.}
\newblock Cham: Springer, 2018.
\newblock \href {https://doi.org/10.1007/978-3-319-78948-4}
  {\path{doi:10.1007/978-3-319-78948-4}}.

\bibitem[FC06]{FilliatreConchon-2006}
Jean-Christophe Filli{\^a}tre and Sylvain Conchon.
\newblock Type-safe modular hash-consing.
\newblock {\em Proc. 2006 workshop on ML}, pages 12--19, 2006.
\newblock \href {https://doi.org/10.1145/1159876.1159880}
  {\path{doi:10.1145/1159876.1159880}}.

\bibitem[FKM16]{FujiiKatsumataMellies-2016}
Soichiro Fujii, Shin{-}ya Katsumata, and Paul{-}Andr{\'{e}} Melli{\`{e}}s.
\newblock Towards a formal theory of graded monads.
\newblock In Bart Jacobs and Christof L{\"{o}}ding, editors, {\em Foundations
  of Software Science and Computation Structures - 19th International
  Conference, {FOSSACS} 2016, Held as Part of the European Joint Conferences on
  Theory and Practice of Software, {ETAPS} 2016, Eindhoven, The Netherlands,
  April 2-8, 2016, Proceedings}, volume 9634 of {\em Lecture Notes in Computer
  Science}, pages 513--530. Springer, 2016.
\newblock \href {https://doi.org/10.1007/978-3-662-49630-5\_30}
  {\path{doi:10.1007/978-3-662-49630-5\_30}}.

\bibitem[FL79]{Fischer1979}
Michael~J. Fischer and Richard~E. Ladner.
\newblock Propositional dynamic logic of regular programs.
\newblock {\em J. Comput. System Sci.}, 18(2):194--211, 1979.
\newblock \href {https://doi.org/10.1016/0022-0000(79)90046-1}
  {\path{doi:10.1016/0022-0000(79)90046-1}}.

\bibitem[GJ17]{GJ17}
Nikolaos Galatos and Peter Jipsen.
\newblock Distributive residuated frames and generalized bunched implication
  algebras.
\newblock {\em Algebra Universalis}, 78:303--336, 2017.
\newblock \href {https://doi.org/10.1007/s00012-017-0456-x}
  {\path{doi:10.1007/s00012-017-0456-x}}.

\bibitem[GJ20]{GJ20}
Nikolaos Galatos and Peter Jipsen.
\newblock The structure of generalized {BI}-algebras and weakening relation
  algebras.
\newblock {\em Algebra Universalis}, 81(3):35, 2020.
\newblock \href {https://doi.org/10.1007/s00012-020-00663-9}
  {\path{doi:10.1007/s00012-020-00663-9}}.

\bibitem[GKO{\etalchar{+}}16]{GaboardiKatsumataOrchardBreuvartUustalu-2016}
Marco Gaboardi, Shin-ya Katsumata, Dominic Orchard, Flavien Breuvart, and Tarmo
  Uustalu.
\newblock Combining effects and coeffects via grading.
\newblock {\em ACM SIGPLAN Notices}, 51(9):476--489, 2016.
\newblock \href {https://doi.org/10.1145/2951913.2951939}
  {\path{doi:10.1145/2951913.2951939}}.

\bibitem[GP07a]{GP07b}
Mai Gehrke and Hilary~A. Priestley.
\newblock Canonical extensions of double quasioperator algebras: An algebraic
  perspective on duality for certain algebras with binary operations.
\newblock {\em J.~Pure Appl. Algebra}, 209(1):269--290, 2007.
\newblock \href {https://doi.org/10.1016/j.jpaa.2006.06.001}
  {\path{doi:10.1016/j.jpaa.2006.06.001}}.

\bibitem[GP07b]{GP07a}
Mai Gehrke and Hilary~A. Priestley.
\newblock Duality for double quasioperator algebras via their canonical
  extensions.
\newblock {\em Studia Logica}, 86(1):31--68, 2007.
\newblock \href {https://doi.org/10.1007/s11225-007-9045-x}
  {\path{doi:10.1007/s11225-007-9045-x}}.

\bibitem[GS14]{GhicaSmith-2014}
Dan~R. Ghica and Alex~I. Smith.
\newblock Bounded linear types in a resource semiring.
\newblock In Zhong Shao, editor, {\em Programming Languages and Systems - 23rd
  European Symposium on Programming, {ESOP} 2014, Held as Part of the European
  Joint Conferences on Theory and Practice of Software, {ETAPS} 2014, Grenoble,
  France, April 5-13, 2014, Proceedings}, volume 8410 of {\em Lecture Notes in
  Computer Science}, pages 331--350. Springer, 2014.
\newblock \href {https://doi.org/10.1007/978-3-642-54833-8\_18}
  {\path{doi:10.1007/978-3-642-54833-8\_18}}.

\bibitem[Gua18]{Guatto-2018}
Adrien Guatto.
\newblock A generalized modality for recursion.
\newblock In Anuj Dawar and Erich Gr{\"{a}}del, editors, {\em Proceedings of
  the 33rd Annual {ACM/IEEE} Symposium on Logic in Computer Science, {LICS}
  2018, Oxford, UK, July 09-12, 2018}, pages 482--491. {ACM}, 2018.
\newblock \href {https://doi.org/10.1145/3209108.3209148}
  {\path{doi:10.1145/3209108.3209148}}.

\bibitem[HM79]{HM79}
W.~Charles Holland and Stephen~H. McCleary.
\newblock Solvability of the word problem in free lattice-ordered groups.
\newblock {\em Houston J. Math.}, 5(1):99--105, 1979.

\bibitem[LDF{\etalchar{+}}22]{OCaml}
Xavier Leroy, Damien Doligez, Alain Frisch, Jacques Garrigue, Didier R{\'e}my,
  and J{\'e}r{\^o}me Vouillon.
\newblock The {{OCaml}} system release 4.14, 2022.
\newblock URL: \url{https://v2.ocaml.org/manual/}.

\bibitem[LG88]{GiffordLucassen-1988}
John~M. Lucassen and David~K. Gifford.
\newblock Polymorphic effect systems.
\newblock In Jeanne Ferrante and Peter Mager, editors, {\em Conference Record
  of the Fifteenth Annual {ACM} Symposium on Principles of Programming
  Languages, San Diego, California, USA, January 10-13, 1988}, pages 47--57.
  {ACM} Press, 1988.
\newblock \href {https://doi.org/10.1145/73560.73564}
  {\path{doi:10.1145/73560.73564}}.

\bibitem[LL66]{LL66}
Hans L{\"a}uchli and Joseph Leonard.
\newblock On the elementary theory of linear order.
\newblock {\em Fund. Math.}, 59:109--116, 1966.
\newblock \href {https://doi.org/10.4064/FM-59-1-109-116}
  {\path{doi:10.4064/FM-59-1-109-116}}.

\bibitem[MPT23]{MPT23}
George Metcalfe, Francesco Paoli, and Constantine Tsinakis.
\newblock {\em Residuated Structures in Algebra and Logic}, volume 277 of {\em
  Mathematical Surveys and Monographs}.
\newblock American Mathematical Society, 2023.

\bibitem[Nak00]{Nakano-2000}
Hiroshi Nakano.
\newblock A modality for recursion.
\newblock In {\em 15th Annual {IEEE} Symposium on Logic in Computer Science,
  Santa Barbara, California, USA, June 26-29, 2000}, pages 255--266. {IEEE}
  Computer Society, 2000.
\newblock \href {https://doi.org/10.1109/LICS.2000.855774}
  {\path{doi:10.1109/LICS.2000.855774}}.

\bibitem[San20]{San20}
Luigi Santocanale.
\newblock The involutive quantaloid of completely distributive lattices.
\newblock In Uli Fahrenberg, Peter Jipsen, and Michael Winter, editors, {\em
  Relational and Algebraic Methods in Computer Science - 18th International
  Conference, RAMiCS 2020, Palaiseau, France, April 8-11, 2020, Proceedings
  [postponed]}, volume 12062 of {\em Lecture Notes in Computer Science}, pages
  286--301. Springer, 2020.
\newblock \href {https://doi.org/10.1007/978-3-030-43520-2\_18}
  {\path{doi:10.1007/978-3-030-43520-2\_18}}.

\bibitem[vGGMS21]{GGMS21}
Sam van Gool, Adrien Guatto, George Metcalfe, and Simon Santschi.
\newblock Time warps, from algebra to algorithms.
\newblock In Uli Fahrenberg, Mai Gehrke, Luigi Santocanale, and Michael Winter,
  editors, {\em Relational and Algebraic Methods in Computer Science - 19th
  International Conference, RAMiCS 2021, Marseille, France, November 2-5, 2021,
  Proceedings}, volume 13027 of {\em Lecture Notes in Computer Science}, pages
  309--324. Springer, 2021.
\newblock \href {https://doi.org/10.1007/978-3-030-88701-8\_19}
  {\path{doi:10.1007/978-3-030-88701-8\_19}}.

\bibitem[Yet90]{Yet90}
David~N. Yetter.
\newblock Quantales and (noncommutative) linear logic.
\newblock {\em J. Symbolic Logic}, 55(1):41--64, 1990.
\newblock \href {https://doi.org/10.2307/2274953} {\path{doi:10.2307/2274953}}.

\end{thebibliography}
%------------------------------------------------------------------------------

\end{document}